\newif\iffull
\renewcommand{\thefootnote}{\fnsymbol{footnote}}
\newtheorem{theorem}{Theorem}
\newtheorem{corollary}{Corollary}
\newtheorem{lemma}{Lemma}
\newtheorem{fact}{Fact}
\newtheoremstyle{redstyle}
     {3pt}
     {3pt}
     {\color{black}}
     {}
     {\color{red}\bfseries}
     {:}
     {.5em}
     {}
\theoremstyle{redstyle}
\newcommand{\m}{\mathcal}
\newcommand{\cT}{{\mathcal T}}
\newcommand{\cN}{{\mathcal N}}
\newcommand{\cI}{{\mathcal I}}
\newcommand{\ep}{\varepsilon}
\newcommand{\eps}{\varepsilon}
\newcommand{\DIR}{\text{DIR}}
\newcommand{\dogory}{\vspace*{-3pt}}
\newcommand{\labell}[1]{\label{#1}}
\renewcommand{\paragraph}[1]{\vspace*{0.5ex}\noindent {\bf #1}}
\newcommand{\remove}[1]{}
\newcommand{\dist}{\text{dist}}
\newcommand{\NAT}{{\mathbb N}}
\newcommand{\INT}{{\mathbb Z}}
\newcommand{\dk}[1]{#1} 
\newcommand{\tj}[1]{#1} 
\newcommand{\comment}[1]{}
\newcommand{\degree}{^o}
\begin{document}

\title{Distributed Broadcasting in Wireless Networks under the SINR Model\thanks{%
	This work was supported by the
	EPSRC grant EP/G023018/1.}
	}

\author{%
    Tomasz Jurdzinski\footnotemark[2] \footnotemark[3]
    \and
    Dariusz R.~Kowalski\footnotemark[3] \footnotemark[4]
    \and
    Tomasz Maciejewski\footnotemark[2]
    \and
    Grzegorz Stachowiak\footnotemark[2]
}

\footnotetext[2]{Institute of Computer Science, University of Wroc{\l}aw, Poland.}

\footnotetext[3]{Department of Computer Science,
            University of Liverpool,
            Liverpool L69 3BX, UK.
            }

\footnotetext[4]{Institute IMDEA Networks, 28918 Madrid, Spain.}

\date{}

\maketitle


\begin{abstract}
In the advent of large-scale multi-hop wireless technologies, such as MANET, VANET, iThings, it is of utmost importance to devise efficient distributed protocols to maintain network architecture and provide basic communication tools. One of such fundamental communication tasks is broadcast, also known as a 1-to-all communication.
We propose several new efficient distributed algorithms and evaluate their time performance both theoretically and by simulations. First randomized algorithm accomplishes broadcast in $O(D+\log(1/\delta))$
rounds with probability at least $1-\delta$ on {\em any} uniform-power network of $n$ nodes and diameter $D$, when equipped with local estimate of network density.
Additionally, we evaluate average performance of this protocols by simulations on two classes of
generated networks --- uniform and social --- and compare the results with
performance of exponential backoff heuristic.
Ours is the first provably efficient and well-scalable distributed solution for the (global) broadcast task.
The second randomized protocol developed in this paper
does not rely on the estimate of local density, and achieves only slightly higher time performance
$O((D+\log(1/\delta))\log n)$.
%
\comment{
The Signal-to-Interference-and-Noise-Ratio model (SINR) is currently the most popular model for
analyzing communication in wireless networks. Roughly speaking, it allows receiving a message
if the strength of the signal carrying the message dominates over the
combined strength of the remaining signals and the background noise at the receiver.
There is a large volume of analysis done under the SINR model in the centralized setting,
when both network topology and communication tasks are provided as a part of the common input,
but surprisingly not much is known
in ad hoc setting, when nodes have very limited knowledge about the network topology.
In particular, there is no theoretical study of deterministic solutions to multi-hop
communication tasks, i.e., tasks in which packets often have to be relayed in order to reach their destinations.
These kinds of problems, including broadcasting, routing, group communication, leader election,
and many others, are important from perspective of development of
future multi-hop wireless and mobile technologies, such as MANET, VANET, Internet of Things.

\tj{In this paper... efficient randomized algorithms
\begin{itemize}
\item
with density knowledge: $O(D\log n)$; zaleznosc od $\eps$ dodac;
\item
ad-hoc: $O(D\log^2 n)$; zaleznosc od $\eps$ dodac;
\end{itemize}
and ad-hoc deterministic with asymptotic complexity $O(\frac{D\log^2 N}{\eps^2})$.
}
}

\vspace*{1ex}
\noindent
{\bf Keywords:} Ad Hoc wireless networks, Signal-to-Interference-and-Noise-Ratio (SINR) model,
Broadcast, Distributed algorithms.

\end{abstract}



\renewcommand{\thefootnote}{\arabic{footnote}}


\section{Introduction}

In this work we consider a broadcast problem in ad-hoc wireless networks under the
Signal-to-Interference-and-Noise-Ratio model (SINR).
Wireless network consists of at most $n$ stations, also called nodes, with unique integer IDs
and uniform transmission powers $P$, deployed in the two-dimensional space with Euclidean metric.
Each station initially knows only its own ID, location and
the upper bound $n$ on the number of nodes.
Locations of stations and parameters of the SINR model determine a {\em communication graph}
of the network:
It is defined on network nodes
and contains links $(v,w)$ such that the distance from $v$ to $w$ is at most $(1-\eps)$ of the
maximum SINR ratio possible at node $w$,
where $0<\eps<1$ is a fixed model parameter.
\dk{This definition is common in the literature, c.f., \cite{YuHWTL12}, and naturally motivated.
Intuitively, we should be able to propagate messages along the links in the communication graph,
but not necessarily between any two non-connected nodes, as the distance between them
(bigger than $(1-\eps)$ fraction of the maximum SINR ratio possible) could be easily interrupted by even
a single faraway transmission in realistic scenarios.}
%
%
%
We consider two settings: one with local knowledge of density, in which each station knows also the
number of other stations in its close proximity (dependent on parameter $\eps$) and the other
when no extra knowledge is assumed.

In the broadcast problem, there is one designated node, called a source, which has a piece
of information, called a source message or a broadcast message,
which must be delivered to all other accessible nodes by using wireless communication.
In the beginning, only the source is executing the broadcast protocol,
and the other nodes join the execution after receiving the broadcast message for the first time.
The goal is to minimize time needed for accomplishing the broadcast~task.

\subsection{Previous and Related Results}

In this work, we study
the problem of {\em distributed broadcasting} in ad hoc wireless networks
under the SINR physical model, from both theoretical and simulation perspectives.
%
%
In what follows, we discuss most relevant results in the SINR model, and
the state of the art obtained in the older Radio Network model.

\paragraph{SINR model.}
In the 
SINR model in ad hoc setting,
slightly weaker task of {\em local} broadcasting, in which nodes have to inform
only their neighbors in the corresponding communication graph,
was studied in \cite{YuWHL11}.
The considered setting allowed power control by deterministic algorithms,
in which, in order to avoid collisions,
stations could transmit with any power smaller than the maximal one.
Randomized solutions for contention resolution~\cite{KV10}
and local broadcasting~\cite{GoussevskaiaMW08} were also obtained.

Recently, a distributed randomized algorithm for multi-broadcast has been
presented \cite{YuHWTL12} for uniform networks. Although the problem solved in that paper
is a generalization
of broadcast, the presented solution is restricted merely to networks having
the communication graph connected for $\eps=\frac23 r$, where $r$
is the largest possible SINR ratio. In contrast, our solutions are efficient and scalable
for {\em any} networks with communication graph connected for {\em any} value of $\eps<\frac{1}{2}$.

There is a vast amount of work on centralized algorithms under the SINR model.
The most studied problems include connectivity, capacity maximization,
link scheduling types of problems;
for recent results and references we refer the reader to the survey~\cite{WatSurv}.
Multiple Access Channel properties were also recently studied
under the SINR model, c.f.,~\cite{RichaSSZ}.

\paragraph{Radio network model.}
There are several papers analyzing broadcasting in the radio model of wireless networks,
under which a message is successfully heard if there are no other simultaneous transmissions
from the {\em neighbors} of the receiver in the communication graph.
This model does not take into account the real strength of the received signals, and also the signals
from outside of some close proximity.
In the {\em geometric} ad hoc setting, Dessmark and Pelc~\cite{DessmarkP07} were the first who studied
the broadcast problem.
They analyzed the impact of local knowledge, defined as a range within which
stations can discover the nearby stations.
Unlike most research on broadcasting problem and the assumptions of this paper,
Dessmark et\ al. \cite{DessmarkP07} assume spontaneous wake-up of stations. That is,
stations are allowed to do some pre-processing
(including sending/receiving messages) prior receiving
the broadcast message for the first time.
Moreover it is assumed in \cite{DessmarkP07} that IDs are strictly from $\{1,\ldots,n\}$,
which makes the setting even
less comparable with the one considered in this work.
Emek et al.~\cite{EmekGKPPS09} designed a broadcast algorithm
working in time $O(Dg)$
in UDG radio networks with eccentricity $D$ and granularity $g$, where
eccentricity was defined as the minimum number of hops to propagate the broadcast message throughout
the whole network and
granularity was defined as
the
inverse of the minimum distance between any two stations times the \tj{maximal range} of a station.
Later, Emek et al.~\cite{EmekKP08} developed a matching lower bound $\Omega(Dg)$.
There were several works analyzing deterministic broadcasting in geometric graphs in the centralized radio setting,
c.f.,~\cite{GasieniecKLW08,SenH96}.

The problem of broadcasting is well-studied in the setting of {\em graph radio model}, in which stations
are not necessarily deployed in a metric space;
here we restrict to only the most relevant results.
In deterministic ad hoc setting with no local knowledge, the fastest $O(n\log(n/D))$-time algorithm in symmetric networks was developed by Kowalski~\cite{Kow-PODC-05}, and almost matching lower
bound was given by Kowalski and Pelc~\cite{KP-DC-05}.
For recent results and references in less related settings we refer the reader
to~\cite{DeMarco-SICOMP-10, KP-DC-05,Censor-HillelGKLN11}
%
There is also a vast literature on randomized algorithms for broadcasting in graph radio model
\cite{KushilevitzM98,KP-DC-05,CzumajRytter-FOCS-03}.
Since they are quite efficient, there are very few studies of the problem restricted to
the geometric setting. However, when mobility of stations is assumed, location and movement
of stations on the plane is natural. Such settings were studied e.g.,\ in
\cite{Farach-ColtonM07}.


\subsection{Our Results}
In this paper we present distributed algorithms for broadcasting in wireless connected networks
deployed in two dimensional Euclidean space under the SINR model, with uniform power assignment
and any $\ep<\frac{1}{2}$.
We distinguish between the two settings:
one with local knowledge of density, in which each station knows the
\tj{upper bound on the number of other stations in its close proximity (dependent on parameter $\eps$)}
and the other
when no extra knowledge is assumed.

\dk{In the former model, we develop a randomized broadcasting
algorithm with time complexity $O(D+\log (1/\delta))$, where $D$ is the eccentricity of
the communication graph, and $\delta$ is the maximal error probability.
This analysis is complemented by the results of simulations on uniform and social networks,
which compare favorably with the performance of exponential backoff protocol.
In the latter model, we give a solution with time complexity $O((D+\log (1/\delta))\log n)$.
All these results hold for model parameter $\alpha>2$;
for $\alpha=2$ the randomized solutions are slower by factor $\log^2 n$ and the deterministic one
becomes slower as well.
}

\vspace*{-1ex}
\section{Model, Notation and Technical Preliminaries}

Throughout the paper, $\NAT$ denotes the set of natural numbers,
$\NAT_+$ denotes the set $\NAT\setminus\{0\}$, and $\INT$
denotes the set of integers.
For $i,j\in\INT$, we use the notation $[i,j]=\{k\in\NAT\,|\,i\leq k\leq j\}$
and $[i]=[1,i]$.

We consider a wireless network consisting of $n$ {\em stations}, also called {\em nodes},
deployed into a two dimensional
Euclidean space and communicating by a wireless medium.
All stations have unique integer IDs in set $[\cI]$; in this paper, we assume that
$\cI=\text{poly}(n)$.
Stations of a network are denoted by letters $u, v, w$, which simultaneously
denote their IDs.
Stations are located on the plane with {\em Euclidean metric} $\dist(\cdot,\cdot)$,
and each station knows its coordinates.
Each station $v$ has its {\em fixed transmission power} $P_v$, which is a positive real number;
in each round, each station either does not transmit a message or it transmits with
its full transmission power $P_v$.
In this work we consider a uniform transmission power setting in which $P_v=1$ for every station $v$.
There are three fixed model parameters: path loss
$\alpha\geq 2$,
threshold $\beta\ge 1$, and ambient noise $\cN\ge 1$.
The $SINR(v,u,\cT)$ ratio, for given stations $u,v$ and a set of (transmitting) stations $\cT$,
is defined as follows:
\vspace*{-1ex}
\begin{equation}\label{e:sinr}
SINR(v,u,\cT)
=
\frac{P_v\dist(v,u)^{-\alpha}}{\cN+\sum_{w\in\cT\setminus\{v\}}P_w\dist(w,u)^{-\alpha}}
\end{equation}

\vspace*{-1ex}
In the {\em Signal-to-Interference-and-Noise-Ratio model} (SINR) considered in this work,
station $u$ successfully receives a message from station $v$ in a round if
$v\in \cT$, $u\notin \cT$, and

\vspace*{-2.5ex}
$$SINR(v,u,\cT)\ge\beta \ ,$$

\vspace*{-1.5ex}
\noindent
where $\cT$ is the set of stations transmitting at that round.
\remove{ 
As the first of the above
conditions is a standard formula defining SINR model in the literature, the second condition
is less obvious. Informally, it states that reception of a message at a station $v$ is possible
only if the power received by $u$ is at least $(1+\eps)$ times larger than the minimum power
needed to deal with ambient noise. This assumption is quite common in the literature
(c.f.,\ \cite{KV10}), for two reasons.
First, it captures the case when the ambient noise, which in practice is of random nature,
may vary by factor $\eps$ from its mean value $\cN$ (which holds with some meaningful
probability).
Second, the lack of this assumption trivializes many communication tasks; for example,
in case of the broadcasting problem, the lack of this assumption implies
a trivial lower bound $\Omega(n)$ on time complexity, even for shallow network
topologies of eccentricity
$O(\sqrt{n})$ (i.e., of $O(\sqrt{n})$ hops) and for centralized and randomized algorithms.\footnote{%
Indeed, assume that we have a network whose all vertices
form a grid
$\sqrt{n}\times \sqrt{n}$ such that $P_v=1$ for each station $v$ and
distances between consecutive elements of the grid
are $(\beta\cdot\cN)^{-1/\alpha}$; that is, the power of the signal received by each
station is at most equal to the ambient noise.
If the constraint
$P_v\dist^{-\alpha}(v,u)\geq (1+\eps)\beta\cN$ is not required for reception
of the message, the source message can still be sent to each station of the network. However,
if more than one station is sending a message simultaneously, no station in the
network receives a message.
}
} 


\paragraph{Ranges and uniformity.}
The {\em communication range} $r_v$ of a station $v$ is the radius of the circle in which a message transmitted
by the station is heard, provided no other station transmits at the same time. That is $r_v$ is the largest
value such that $SINR(v,u,\cT)\geq \beta$, provided $\cT=\{v\}$ and $d(v,u)=r_v$.
A network
is
{\em uniform}, when ranges (and thus transmission powers) of all stations are equal,
or {\em nonuniform} otherwise.
In this paper, only uniform networks are considered.
For clarity of presentation
we make the assumption that all powers are equal, 
i.e., $P_v=P$ for each $v$.
Thus, $r_v=r$ for $r=\left(\frac{P}{\beta\cN}\right)^{1/\alpha}$ and each station $v$.
\tj{For simplicity,} we assume that $r=1$ which implies that $P=\beta \cN$.
The assumption that $r=1$ can be dropped without changing
asymptotic formulas for presented algorithms and lower bounds.

\remove{
Under these assumptions, $r_v=r=(1+\eps)^{-1/\alpha}$
for each station $v$.
The {\em range area} of a station with range $r$
located at the point $(x,y)$ is defined as the circle with radius $r$.
}



\paragraph{Communication graph and graph notation.}
The {\em communication graph} $G(V,E)$
of a given network
consists of all network nodes and edges $(v,u)$ such that $d(v,u)\leq (1-\eps)r=1-\eps$,
\tj{where $\eps<1$} is
a fixed model parameter.
The meaning of the communication graph is as follows: even though the idealistic communication
range is $r$, it may be reached only in a very unrealistic case of single transmission in the whole
network. In practice, however, many nodes located in different parts of the network often
transmit simultaneously, and therefore it is reasonable to assume that we may
hope for a slightly smaller range to be achieved.
The communication graph envisions the network of such ``reasonable reachability''.
Observe that the communication graph is symmetric for uniform networks, which are considered
in this paper.
By a {\em neighborhood} of a node $u$ we mean the set (and positions) of all
neighbors of $u$ in the communication graph $G(V,E)$
of the underlying network, i.e., the set $\{w\,|\, (w,u)\in E\}$.
The {\em graph distance} from $v$ to $w$ is equal to the length of a shortest path from $v$ to $w$
in the communication graph, where the length of a path is equal to the number of its edges.
The {\em eccentricity} of a node
is the maximum graph
distance from this node to all other nodes
(note that the eccentricity is of order of the diameter if the communication
graph is symmetric --- this is also the case in this work).

\tj{
We say that a station $v$ transmits {\em $c$-successfully} in a round
$t$ if $v$ transmits a message in round $t$ and this message is received by
each station $u$ in Euclidean distance from $v$ smaller or equal to $c$. }
We say that node $v$ transmits {\em successfully} to node $u$ in a round $t$ if $v$ transmits
a message in round $t$ and $u$ receives this message.
A station $v$ transmits {\em successfully} in round $t$ if it transmits
successfully to each of its neighbors in the communication graph.
%
%

\paragraph{Synchronization.}
It is assumed that algorithms work synchronously in time slots, also called {\em rounds}:
each station can act
either as a sender or as a receiver during a round.
We do not assume global clock ticking; algorithm could easily synchronize their rounds by updating
round counter and passing it along the network with messages.

\paragraph{Collision detection.}
We consider the model without {\em collision detection}, that is,
if a station $u$ does not receive a message in a round $t$, it has no information
whether any other station was transmitting
in that round
and about the value of $SINR(v,u,\m{T})$, for any station $v$, where $\m{T}$ is the set
of transmitting stations in round $t$.

\paragraph{Broadcast problem and performance parameters.}
In the broadcast problem studied in this work, there is one distinguished node, called a {\em source},
which initially holds a piece of information, also called a {\em source message} or a
{\em broadcast message}.
The goal is to disseminate this message to all other nodes by sending messages along the network.
The complexity measure is the worst-case time to accomplish the broadcast task,
taken over all connected networks with specified parameters.
Time, also called the {\em round complexity}, denotes here the number of communication rounds in
the execution of a protocol: from the round when the source is activated
with its broadcast message till the broadcast task is accomplished (and each station is aware that
its activity in the algorithm is finished).
For the sake of complexity formulas, we consider the following parameters:
$n$, $N$, $D$, and $g$, where
$n$ is the number of nodes,
$[N]$ is the range of IDs,
$D$ is the eccentricity of the source,
and $g$ is the granularity of the network, defined as $r$ times the
inverse of the minimum distance between any two stations (c.f.,~\cite{EmekGKPPS09}).

\paragraph{Messages and initialization of stations other than source.}
We assume that a single message sent in the execution of any algorithm
can carry the broadcast message and at most polynomial in the
size of the network $n$ number of control bits in the size of the network
\tj{(however, our randomized algorithms need only logarithmic number of control bits)}.
For simplicity of analysis, we assume that every message sent during the execution
of our broadcast protocols contains the broadcast message; in practice, further optimization
of a message content could be done in order to reduce the total number of transmitted bits in real executions.
A station other than the source starts executing the broadcasting protocol
after the first successful receipt of the broadcast message; we call it
a {\em non-spontaneous wake-up model}, to distinguish from other possible settings,
not considered in this work,
where stations could be allowed to do some pre-processing
(including sending/receiving messages) prior receiving
the broadcast message for the first time.
We say that a station that received the broadcast message is {\em informed}.

\paragraph{Knowledge of stations.}
Each station knows its own ID, location, and parameters $n$, $N$.
(However, in randomized solutions, IDs can be chosen randomly from
the polynomial range such that each ID is unique with high probability.)
Some subroutines use the granularity $g$ as a parameter, though
our main algorithms can use these subroutines without being aware
of the actual granularity of the input network.
We consider two settings: one with local knowledge of density, in which each station knows also the
number of other stations in its close proximity (dependent on the $\eps$ parameter) and the other
when no extra knowledge is assumed.

\remove{
depending on the algorithm, other general network parameters such as:
diameter $D$ of the imposed communication graph,
or granularity of the network $g$, defined as the inverse of the smallest
distance between any pair of stations.\footnote{%
In many cases, the assumption about the knowledge of $D,g$ can be dropped,
by running parallel threads for different ranges of values of these parameters and implementing an additional coordination mechanism between the threads.}
}

\subsection{Grids}

%
Given a parameter $c>0$, we define 
a partition of the $2$-dimensional space
into square boxes of size $c\times c$ by the grid $G_c$, in such a way that:
all boxes are aligned with the coordinate axes,
point $(0,0)$ is a grid point,
each box includes its left side without the top
endpoint and its bottom side without the right endpoint and
does not include its right and top sides.
We say that $(i,j)$ are the coordinates
of the box with its bottom left corner located at $(c\cdot i, c\cdot j)$,
for $i,j\in \INT$. A box with coordinates
$(i,j)\in\INT^2$ is denoted $C_c(i,j)$ or $C(i,j)$ when the side of a grid
is clear from the context.

Let $\eps$ be the parameter defining the communication graph.
Then, $z=(1-\eps)r/\sqrt{2}$ is the largest value such that
the each two stations located in the same box of the grid $G_{z}$
are connected in the communication graph.
Let $\eps'=\eps/3$, $r'=(1-\eps')r=1-\eps'$ and $\gamma'=r'/\sqrt{2}$.
We call $G_{\gamma'}$ the {\em pivotal grid}, borrowing terminology from
radio networks research~\cite{DessmarkP07}.
%

Boxes $C(i,j)$ and $C'(i',j')$ are {\em adjacent} if
$|i-i'|\leq 1$ and $|j-j'|\leq 1$ (see Figure~\ref{fig:adjacent}).
For a station $v$ located in position $(x,y)$ on the plane we define its {\em grid
coordinates} $G_c(v)$ with respect to the grid $G_c$ as the pair of integers $(i,j)$ such that the point $(x,y)$ is located
in the box $C_c(i,j)$ of the grid $G_c$ (i.e., $ic\leq x< (i+1)c$ and
$jc\leq y<(j+1)c$).
\dk{The distance between two different boxes is the maximum Euclidean distance between
any two points of these boxes.
The distance between a box and itself is $0$.}

\remove{ 
A (general) {\em broadcast schedule} $\mathcal{S}$ of length $T$
wrt  $N\in\NAT$ is a mapping
from $[N]$ to binary sequences of length $T$.
A station
with identifier $v\in[N]$ {\em follows}
the schedule $\m{S}$ of length $T$ in a fixed period of time consisting of $T$ rounds,
when
$v$ transmits a message in round $t$ of that period iff
the 
position $t\mod T$ of
$\m{S}(v)$ is equal to $1$.

A {\em geometric broadcast schedule} $\mathcal{S}$ of length $T$
with parameters $N,\delta\in\NAT$, $(N,\delta)$-gbs for short, is a mapping
from $[N]\times [0,\delta-1]^2$ to binary sequences of length $T$.
Let $v\in[N]$ be a station whose grid coordinates
with respect to
the grid $G_c$ are equal to $G_c(v)=(i,j)$.
We say that $v$ {\em follows}
$(N,\delta)$-gbs $\m{S}$ 
for the grid $G_c$
in a fixed period of time, 
when $v$ transmits a message in round $t$ of that period iff
the $t$th position of
$\m{S}(v,i\mod \delta,j\mod\delta)$ is equal to $1$.
}  

\tj{
A set of stations $A$ on the plane is {\em $d$-diluted} wrt $G_c$, for $d\in\NAT\setminus\{0\}$, if
for any two stations $v_1,v_2\in A$ with grid coordinates $(i_1,j_1)$ and $(i_2,j_2)$, respectively,
the relationships $(|i_1-i_2|\mod d)=0$ and $(|j_1-j_2|\mod d)=0$ hold.}

\remove{ 
Let $\m{S}$ be a general broadcast schedule wrt $N$ of length $T$,
let $c>0$ and $\delta>0$, $\delta\in\NAT$.
A $\delta$-dilution of a  $\m{S}$ 
is defined as a $(N,\delta)$-gbs $\m{S}'$ such that the bit $(t-1)\delta^2+a\delta+b$
of $\m{S}'(v,a,b)$ is equal to $1$ iff the bit $t$ of $\m{S}(v)$
is equal to $1$. That is, each round $t$ of $\m{S}$ is
partitioned
into $\delta^2$ rounds
of $\m{S}'$, indexed by pairs $(a,b)\in [0,\delta-1]^2$, such that a station
with grid coordinates $(i,j)$ in $G_c$ is
allowed to send messages only in rounds with index $(i\mod\delta,j\mod\delta)$,
provided schedule $\m{S}$ admits a transmission in its (original) round $t$.
%
%
} 

\begin{figure}
\begin{center}
\epsfig{file=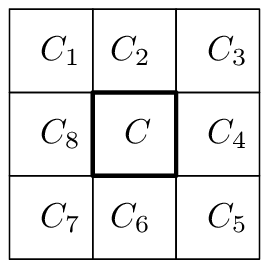, scale=0.8}
\end{center}
\vspace*{-2ex}
\caption{%
The boxes
$C_1,\ldots,C_8$ are adjacent to $C$.}
\label{fig:adjacent}
\vspace*{-2ex}
\end{figure}%
\remove{ 
Observe that, since ranges of stations are equal to the length
of diagonal of boxes of the pivotal grid, a box $C(i,j)$ can have at most
$20$ neighbors (see Figure~\ref{fig:adjacent}).
We define the set $\DIR\subset[-2,2]^2$ such  that $(d_1,d_2)\in\DIR$ iff
it is possible that boxes with coordinates $(i,j)$ and $(i+d_1,j+d_2)$
can be neighbors.
Given $(i,j)\in\INT^2$ and $(d_1,d_2)\in\DIR$, we say that the box $C(i+d_1,j+d_2)$
is {\em located in direction} $(d_1,d_2)$ from the box $C(i,j)$.
} 

\comment{%
\tj{For a family $\m{F}=(F_0,\ldots,F_{k-1})$ of subsets of $[N]$, an {\em execution}
of $\m{F}$ on a set of stations $V$ is a protocol in which $v\in V$ transmits in
rund $i$ iff $v\in F_{i\mod k}$.
}
A family $S$ of subsets of $[N]$ is a {\em $(N,k)$-ssf (strongly-selective family)}
if, for every non empty subset $Z$ of $[N]$ such that
$|Z|\leq k$ and for every element $z\in Z$, there is a set $S_i$ in $S$ such that
$S_i\cap Z=\{z\}$.
It is known that there exists $(N,k)$-ssf of size $O(k^2\log N)$ for every $k\leq N$,
c.f.,~\cite{ClementiMS01}.
Let $k=(2d+1)^2$, let $S$ be a $(N,k)$-ssf,
$s=|S|=O(\log N)$,
$N=poly(n)$.
The sets $S_0,\ldots,S_{s-1}$ of the family $S$ define a broadcast schedule $S'$ in such a way
that station $v$ transmits in round $t$ iff $v\in S_{t\mod s}$. (That is, $S'_{t\mod s}[v]=1$ iff
$v\in S_{t\mod s}$.)
}

\section{Randomized Algorithms}
\label{s:random}

We first present and analyze an algorithm relying on the knowledge of local density,
and then we proceed with general protocol that does not require such knowledge in the input.



\subsection{Algorithms for Known Local Density}
\labell{s:dknown}

In this section we describe our broadcasting algorithm
for networks of known local density. That is every station $v$
knows the total number of stations $\Delta=\Delta(v)$ in its box of the grid $G_\gamma$.
In this section $\gamma = \frac{\eps}{2\sqrt{2}}$.
The global value ``counter'' is transmitted by nodes together with the message.
We say that $(i,j)\equiv (a,b)\mod d$ if and only if $i\equiv a \mod d$ and $j\equiv b\mod d$.

\begin{algorithm}[H]
	\caption{RandBroadcast$(\Delta,d,T)$ \Comment{code for node $v$}} 
	\label{alg:density}
	\begin{algorithmic}[1]

    \If{$v$ is the source} $v$ transmits \EndIf
    \For{$counter=1,2,3,\ldots,T$}
        \For{each $a,b:0\le a,b<d$}
            \If{$v \in C(i,j): (i,j) \equiv (a,b) \mod d$}
                \State $v$ transmits with probability $1/\Delta$
            \EndIf
         \EndFor
    \EndFor
    \end{algorithmic}
\end{algorithm}

\paragraph{Analysis of time performance of RandBroadcast in any network.}
We start with proving three general properties regarding interference in the SINR model.

\begin{fact}
\labell{fact01}
If the interference at the receiver is at most $\cN\alpha x$, then
it hears the transmitter from the distance $1-x$.
\end{fact}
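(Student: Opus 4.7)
The plan is to unfold the SINR definition and reduce the claim to an elementary inequality.

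First, I would fix a transmitter $v$ at Euclidean distance $d=1-x$ from a receiver $u$ (with $x\in[0,1]$), and let $I$ denote the sum $\sum_{w\in\cT\setminus\{v\}}P_w\,\dist(w,u)^{-\alpha}$ appearing in the denominator of~\eqref{e:sinr}, so that by hypothesis $I\le \cN\,\alpha\, x$. Substituting the standing assumption $P=\beta\cN$ (which follows from $r=1$), I obtain
$$SINR(v,u,\cT)=\frac{\beta\cN(1-x)^{-\alpha}}{\cN+I}\;\ge\;\frac{\beta\cN(1-x)^{-\alpha}}{\cN(1+\alpha x)}\;=\;\frac{\beta}{(1-x)^{\alpha}(1+\alpha x)}\ .$$
The SINR reception threshold $SINR(v,u,\cT)\ge\beta$ therefore reduces to the algebraic inequality $(1-x)^{\alpha}(1+\alpha x)\le 1$ for $x\in[0,1]$ and $\alpha\ge 2$.

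For this remaining fact I would take reciprocals, rewriting the claim as $(1-x)^{-\alpha}\ge 1+\alpha x$. Since $\tfrac{1}{1-x}\ge 1+x$ on $[0,1)$ (equivalent to $1\ge 1-x^{2}$) and Bernoulli's inequality gives $(1+x)^{\alpha}\ge 1+\alpha x$ for $\alpha\ge 1$ and $x\ge 0$, one concludes
$$(1-x)^{-\alpha}\;\ge\;(1+x)^{\alpha}\;\ge\;1+\alpha x\ .$$
Alternatively, setting $f(x)=(1-x)^{\alpha}(1+\alpha x)$, a quick computation yields $f(0)=1$ and $f'(x)=-\alpha(\alpha+1)(1-x)^{\alpha-1}x\le 0$ on $[0,1]$, so $f(x)\le 1$ throughout.

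There is no real obstacle here: the statement is essentially a direct computation from the SINR formula, once one observes that the combined denominator $\cN+I\le\cN(1+\alpha x)$ telescopes cleanly against $(1-x)^{-\alpha}$ via Bernoulli. The only care needed is restricting to the regime $x\in[0,1]$ (outside of which $1-x$ is not a meaningful distance) and using the standing $\alpha\ge 2\ge 1$ so that Bernoulli's inequality applies.
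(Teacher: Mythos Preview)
Your proof is correct and essentially identical to the paper's: both reduce the SINR condition to $(1-x)^{\alpha}(1+\alpha x)\le 1$ and verify it via Bernoulli's inequality $(1+x)^{\alpha}\ge 1+\alpha x$ together with $(1+x)(1-x)=1-x^{2}\le 1$. The paper just presents the chain in the compact form $SINR\ge \frac{P}{\cN(1+x)^{\alpha}(1-x)^{\alpha}}=\frac{P}{\cN(1-x^{2})^{\alpha}}\ge \frac{P}{\cN}=\beta$.
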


\def\FACTONE{
\noindent
{\bf\em Proof of Fact~\ref{fact01}:}
By the Bernoulli inequality we get
$(1+x)^\alpha \ge 1+\alpha x$.
Thus
\[
SINR\ge
\frac{P}{(\cN+\cN\alpha x)(1-x)^\alpha}
\ge
\frac{P}{\cN (1+x)^\alpha(1-x)^\alpha}
\]
\[
=
\frac{P}{\cN (1-x^2)^\alpha}
\ge 
\frac{P}{\cN}=\beta.
\]
\qed
}

We say that a function $d_{\alpha}:\NAT\to\NAT$ is {\em flat} for $\alpha\geq 2$ if
\begin{equation}
d_{\alpha}(n)=\left\{
\begin{array}{rcl}
O(1) & \mbox{ for } & \alpha>2\\
O(\log n) & \mbox{ for } & \alpha=2
\end{array}
\right.
\end{equation}

Consider the following process on the grid $G_\gamma$ for some $d\in\mathbb{N}_+$.
For every box $C(i\cdot d,j\cdot d)$, where $(i,j)\neq (0,0)$,
a number $x(i,j)$ is chosen at random, 
according to some probability distribution with
expected value of at most~$1$.
Next, in every box $C(i\cdot d,j\cdot d)$, where $(i,j)\neq (0,0)$,
$x(i,j)$ locations for transmitting stations are selected by an adversary.
These stations cause some interference in stations in boxes $C(i',j')$ of distance
at most $1$ from the box $C(0,0)$.
We denote the expected value of the maximum of these interferences by $I_d$,
where the maximum is taken over all possible locations of stations selected by the adversary
and over all possible locations of receivers in boxes $C(i',j')$ of distance
at most $1$ from the box $C(0,0)$.

Let $s_\alpha=\min\left\{\frac{\ln n}{2}+\ln 2,\frac{1}{2^{\alpha-2}(\alpha-2)}\right\}+\frac{1}{2^{\alpha}(\alpha-1)}$ and 
$d_{\alpha,I,\gamma}=\left\lceil\frac{1}{\gamma} \left(\frac{8Ps_\alpha}{I}\right)^{1/\alpha}\right\rceil$.

\begin{lemma}
\labell{lemma02}
Consider the process described above.
Then, for any $I>0$ there exists a flat function $d=d(n)$
such that $I\ge I_d$.
Moreover, 
for $I\le \frac{8Ps_\alpha}{2^\alpha}$ we have $I_d\le I$
when 
$d=d_{\alpha,I,\gamma}$.
\end{lemma}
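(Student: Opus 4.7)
The plan is to upper-bound the expected interference $I_d$ at any receiver located in a box of distance at most $1$ from $C(0,0)$ by summing the worst-case contributions of concentric ``rings'' of transmitter boxes around the origin, and then to invert this bound in $d$. We fix a receiver $u$ in such a box, and for a transmitter box $B_{i,j}=C(id,jd)$ with $(i,j)\neq(0,0)$ let $\rho_{i,j}$ denote the minimum Euclidean distance from $u$ to any point of $B_{i,j}$. Elementary geometry gives $\rho_{i,j}\geq (kd-c_0)\gamma$, where $k=\max(|i|,|j|)$ and $c_0$ is a small absolute constant accounting for the side $\gamma$ of a box together with the extent of the neighborhood of $C(0,0)$. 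Since each transmitter inside $B_{i,j}$ contributes at most $P/\rho_{i,j}^{\alpha}$ to the interference at $u$ and $E[x(i,j)]\leq 1$, linearity of expectation yields
\[
I_d \;\leq\; \sum_{(i,j)\neq (0,0)} \frac{P}{\rho_{i,j}^{\,\alpha}} \;\leq\; \sum_{k=1}^{K} \frac{8k\,P}{\bigl((kd-c_0)\gamma\bigr)^\alpha},
\]
where $8k$ is the number of boxes at level $k$ and $K=O(\sqrt{n})$ suffices because at most $n$ boxes contain a station.

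The hypothesis $I\leq 8Ps_\alpha/2^\alpha$ forces $d\gamma\geq 2$, hence $kd-c_0\geq kd/2$ for every $k\geq 1$, which simplifies the geometric factor. We split the sum at $k=1$ and $k\geq 2$: the $k=1$ term, bounded by $8P/((d-c_0)\gamma)^\alpha\leq 8P\cdot 2^\alpha/(d\gamma)^\alpha$, produces the $\frac{1}{2^\alpha(\alpha-1)}$ summand of $s_\alpha$ after rearrangement. For the tail $k\geq 2$,
\[
\sum_{k\geq 2}\frac{8k\,P}{(kd\gamma/2)^\alpha} \;=\; \frac{8\cdot 2^\alpha P}{(d\gamma)^\alpha}\sum_{k\geq 2}k^{1-\alpha},
\]
and we bound the sum by $\int_2^\infty x^{1-\alpha}\,dx=\frac{1}{(\alpha-2)2^{\alpha-2}}$ when $\alpha>2$, and by $\int_2^{K}x^{-1}\,dx\leq \tfrac{1}{2}\ln n+\ln 2$ (using $K=O(\sqrt{n})$) when $\alpha=2$. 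Assembling the two pieces yields exactly $I_d\leq \tfrac{8Ps_\alpha}{(d\gamma)^\alpha}$.

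The quantitative claim is then immediate: the choice $d=\lceil\tfrac{1}{\gamma}(8Ps_\alpha/I)^{1/\alpha}\rceil=d_{\alpha,I,\gamma}$ gives $(d\gamma)^\alpha\geq 8Ps_\alpha/I$ and therefore $I_d\leq I$. The qualitative claim follows because $s_\alpha=O(1)$ for $\alpha>2$ and $s_\alpha=O(\log n)$ for $\alpha=2$, so $d_{\alpha,I,\gamma}$ is itself $O(1)$ or $O(\sqrt{\log n})=O(\log n)$ respectively, both flat in the sense of the definition. The main obstacle, and the only delicate point, is tracking the precise constants to match the explicit form of $s_\alpha$: handling the $k=1$ boundary term separately from the integral tail, verifying that the $\int_2^\infty$ versus $\int_2^{K}$ dichotomy produces the $\min\{\cdot,\cdot\}$ appearing in $s_\alpha$, and checking that the ceiling in $d_{\alpha,I,\gamma}$ does not disturb the inequality $d\gamma\geq 2$ needed for the geometric simplification.
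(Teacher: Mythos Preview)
Your overall strategy---ring decomposition with $8k$ transmitter boxes at level $k$, a lower bound on the distance to ring $k$, and an integral comparison---is exactly the paper's approach, and it suffices for the qualitative first claim (existence of a flat $d$). The problem is the quantitative second claim, which asserts that the \emph{specific} value $d_{\alpha,I,\gamma}=\bigl\lceil\tfrac{1}{\gamma}(8Ps_\alpha/I)^{1/\alpha}\bigr\rceil$ works; for that you must land on the exact constant $s_\alpha$, and your bookkeeping does not.

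Two concrete mismatches. First, from $kd-c_0\ge kd/2$ you pick up a factor $2^\alpha$ in front of the whole sum, so even if your tail estimate $\sum_{k\ge 2}k^{1-\alpha}\le \frac{1}{(\alpha-2)2^{\alpha-2}}$ held, its contribution to the bracket would be $2^\alpha\cdot\frac{1}{(\alpha-2)2^{\alpha-2}}=\frac{4}{\alpha-2}$, not $\frac{1}{2^{\alpha-2}(\alpha-2)}$. Second, the $k=1$ term you isolate is bounded by $8P\cdot 2^\alpha/(d\gamma)^\alpha$, which contributes $2^\alpha$ to the bracket, not $\frac{1}{2^\alpha(\alpha-1)}$; there is no ``rearrangement'' that turns the former into the latter. (There is also a minor slip: for decreasing $f$ one has $\sum_{k\ge 2}f(k)\le\int_1^\infty f$, not $\int_2^\infty f$.)

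The paper avoids the spurious $2^\alpha$ by using the sharper distance bound $kd\gamma-1\ge (k-\tfrac12)d\gamma$ (valid once $d\gamma\ge 2$), which keeps the prefactor as $8P/(d\gamma)^\alpha$ and leaves the sum $\sum_k k/(k-\tfrac12)^\alpha$. That sum is then split algebraically as
\[
\frac{k}{(k-\tfrac12)^\alpha}=\frac{1}{(k-\tfrac12)^{\alpha-1}}+\frac{1/2}{(k-\tfrac12)^\alpha},
\]
and it is the second piece, integrated, that produces the $\frac{1}{2^\alpha(\alpha-1)}$ term of $s_\alpha$, while the first piece gives the $\min\{\tfrac12\ln n+\ln 2,\ \frac{1}{2^{\alpha-2}(\alpha-2)}\}$ part. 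If you redo your estimate with $(k-\tfrac12)d\gamma$ in place of $kd\gamma/2$ and use this algebraic split instead of separating $k=1$ from $k\ge 2$, the constants line up with $s_\alpha$ and the second claim follows.
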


\def\LEMMATWO{
\noindent
{\bf\em Proof of Lemma~\ref{lemma02}:}
Let us estimate the expected maximum interference $I_d$ for an arbitrary $d$.
By this maximum we mean the biggest interference over all points
of the boxes in the distance at most 1 from $C(0,0)$ when the transmitting stations are put to
the boxes $C(i\cdot d,j\cdot d)$ by the adversary.
\[
I_d\le \sum_{k=1}^{\sqrt n/4} \frac{8kP}{(kd\gamma-1)^\alpha}.
\]
If we denote $t=d\gamma$ and assume, that $t>2$ we get
\[
I_d\le \sum_{k=1}^{\sqrt{n}/4} \frac{8kP}{((k-1/2)t)^\alpha}
   \le \frac{8P}{t^\alpha} \sum_{k=1}^{\sqrt{n}/4} \frac{k}{(k-1/2)^{\alpha}}
\]
and
\[
I_d \le \frac{8P}{t^\alpha}\int_1^{\sqrt{n}} \frac{x}{(x-1/2)^{\alpha-1}}.
\]
When $\alpha>2$ we can estimate this expression as follows
\[
I_d\le \frac{8P}{t^\alpha}\int_1^\infty \frac{x}{(x-1/2)^{\alpha-1}}
\]
so
\[
I_d\le \frac{8P}{t^\alpha}\left(\frac{1}{2^{\alpha-2}(\alpha-2)}+\frac{1}{2^{\alpha}(\alpha-1)} \right).
\]
There is also another way to estimate $I_d$, that works also if $\alpha=2$
\[
I_d\le \frac{8P}{t^\alpha}\left(\int_1^{\sqrt{n}} \frac{1}{x-1/2}+\int_1^{\infty} \frac{1}{2(x-1/2)^\alpha}\right)
\]
and
\[
I_d\le \frac{8P}{t^\alpha}\left(\frac{\ln n}{2}+\ln 2+\frac{1}{2^{\alpha}(\alpha-1)} \right).
\]
Thus
\(\displaystyle
I_d\le \frac{8P}{t^\alpha} s_\alpha.
\)

\noindent
Since $t>2$ we always have 
\(\displaystyle
I_d\le \frac{8P}{2^\alpha} s_\alpha.
\)

\noindent
If we want to get $I_d\le I$ we should take such $d=d_{\alpha,I,\gamma}$, that
\[
\frac{8P}{(d\gamma)^\alpha} s_\alpha\le I.
\]

\noindent
So we can good choose $\displaystyle d_{\alpha,I,\gamma}=\left\lceil\frac{1}{\gamma} \left(\frac{8Ps_\alpha}{I}\right)^{1/\alpha}\right\rceil$.
\qed
}

\begin{corollary}\labell{corollary03}
If in the above described process, the expected number of stations in a box is $x$ instead of 1,
then for any $d$ we have the maximum expected interference in boxes in the distance at most $1$ 
equal to $x\cdot I_d$, where $I_d$ is as described in Lemma~\ref{lemma02}.
\end{corollary}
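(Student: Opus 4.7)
The plan is to observe that the bound on $I_d$ established in Lemma~\ref{lemma02} is obtained by a deterministic upper bound on the interference at \emph{any} receiver point in the boxes at distance at most $1$ from $C(0,0)$, expressed as a sum over the transmitter boxes of quantities that are \emph{linear} in the random counts $x(i,j)$. Once the bound is written in this form, Corollary~\ref{corollary03} follows immediately from linearity of expectation, with the scaling factor $x$ pulled out of the sum.

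Concretely, I would first fix an arbitrary adversarial placement of transmitters and an arbitrary receiver point $u$ in one of the boxes at distance at most $1$ from $C(0,0)$, and recall that the proof of Lemma~\ref{lemma02} upper-bounds the interference at $u$ by grouping the transmitter boxes $C(i\cdot d, j\cdot d)$ into rings around $C(0,0)$; the ring at index $k$ contains at most $8k$ boxes, each at Euclidean distance at least $kd\gamma - 1$ from $u$. The contribution of box $C(i\cdot d,j\cdot d)$ to the interference at $u$ is bounded by $x(i,j)\cdot P/(kd\gamma-1)^{\alpha}$ (the adversary can place all $x(i,j)$ stations at the closest corner). Summing over all rings yields a deterministic upper bound on the interference at $u$ that is a linear combination of the random variables $x(i,j)$, and this bound is the same for every receiver $u$ and every adversarial placement; hence it also upper-bounds the maximum of the interferences.

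Now I would simply take expectations. Under the modified process, $\mathbb{E}[x(i,j)] \le x$ for each $(i,j)\neq(0,0)$, so by linearity of expectation the expected value of the above deterministic upper bound, and therefore the expected maximum interference, is at most $x$ times the corresponding quantity obtained in Lemma~\ref{lemma02}, namely $x\cdot I_d$. No step here is delicate: the only thing to be careful about is that the interference bound used inside Lemma~\ref{lemma02} is pointwise in the adversary's choices, which is exactly what allows us to swap the ``max over adversary'' with the expectation via an almost-sure inequality before applying linearity. That pointwise-linearity observation is the whole content of the corollary.
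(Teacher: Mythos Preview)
Your argument is correct and is exactly the intended reading: the paper states Corollary~\ref{corollary03} without proof, treating it as an immediate consequence of the fact that the upper bound derived in the proof of Lemma~\ref{lemma02} is a deterministic, pointwise-in-the-adversary sum that is linear in the counts $x(i,j)$, so linearity of expectation yields the factor~$x$. Your remark that the almost-sure inequality is what lets one pass the maximum over adversarial placements under the expectation is precisely the (only) point worth making explicit.
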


%

We proceed with the analysis of algorithm RandBroadcast.


\begin{fact}\labell{fact11}
Consider a  round of algorithm RandBroadcast, different from the first one.
The probability that in a box $(i,j) \equiv (a,b)$
exactly one station transmits is bigger than~$1/e$.
\end{fact}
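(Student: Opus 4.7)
The claim reduces to an elementary binomial calculation combined with a classical inequality; the only conceptual work is in pinning down the randomness model. I would fix a round with $counter\ge 1$ and the pair $(a,b)$. Every station lying in a box $C(i,j)$ with $(i,j)\equiv (a,b)\pmod d$ that is currently executing RandBroadcast flips an independent biased coin and transmits with probability $1/\Delta$. The fact is stated for the natural regime in which all $\Delta$ stations of the box participate; the number of transmissions from the box is then $\mathrm{Bin}(\Delta,1/\Delta)$.

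A one-line computation then gives
\[
P[\text{exactly one station transmits}]
=\binom{\Delta}{1}\frac{1}{\Delta}\left(1-\frac{1}{\Delta}\right)^{\Delta-1}
=\left(1-\frac{1}{\Delta}\right)^{\Delta-1}.
\]

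To conclude, I would invoke the inequality $(1-1/\Delta)^{\Delta-1}>1/e$, valid for every $\Delta\ge 1$. The cleanest derivation rewrites the left-hand side as $\bigl(1+\tfrac{1}{\Delta-1}\bigr)^{-(\Delta-1)}$ and appeals to the monotone convergence $(1+1/k)^{k}\uparrow e$, which yields $(1+1/(\Delta-1))^{\Delta-1}<e$ for every $\Delta\ge 2$; the boundary case $\Delta=1$ is immediate since the unique station transmits with probability $1$. There is no analytic obstacle here --- the only point warranting care is the setup in the first paragraph, namely that one is counting transmissions among exactly $\Delta$ independent Bernoulli$(1/\Delta)$ trials, which is precisely the regime in which the bound $(1-1/\Delta)^{\Delta-1}$ is achieved.
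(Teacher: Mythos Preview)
Your proposal is correct and follows essentially the same approach as the paper: compute the binomial probability $\Delta\cdot\frac{1}{\Delta}(1-1/\Delta)^{\Delta-1}=(1-1/\Delta)^{\Delta-1}$ and invoke the classical inequality $(1-1/\Delta)^{\Delta-1}>1/e$. The paper's proof is the one-line version of yours; your added justification of the inequality via $(1+1/k)^k\uparrow e$ and the explicit treatment of $\Delta=1$ are welcome but not structurally different.
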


\def\FACTELEVEN{%
\noindent
{\bf\em Proof of Fact~\ref{fact11}:}
Because of the inequality $1/e<(1-1/n)^{n-1}$ we get
\[\Pr(\text{exactly on node transmits})=
  \Delta\frac{1}{\Delta}\left(1-\frac{1}{\Delta}\right)^{\Delta-1}>\frac{1}{e}
\ .
\]
\qed
}


\begin{fact}\labell{fact12}
Consider a  round of algorithm RandBroadcast$(\Delta,d,T)$ for
 $d=d_{\alpha,{\cal N}\alpha\eps/4,\gamma}$, different from the first one.
The probability that exactly one station
in box $C(i,j)$, where $(i,j) \equiv (a,b)$, transmits
and the interference from other stations measured in all boxes connected
with box $C(i,j)$ is smaller or equal to ${\cal N}\alpha\eps/2$
is bigger than $\frac{1}{2e}$.
\end{fact}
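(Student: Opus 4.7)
The plan is to write the event in the statement as the intersection of two independent sub-events and bound each separately. Let $A$ denote the event ``exactly one station in $C(i,j)$ transmits in this round'' and let $B$ denote the event ``the total interference produced by stations outside $C(i,j)$, measured at any point of any box adjacent to or equal to $C(i,j)$, is at most $\cN\alpha\eps/2$''. Only boxes with coordinates congruent to $(a,b)$ modulo $d$ are allowed to transmit in this round, and since the random bits used inside distinct boxes are independent, $A$ and $B$ depend on disjoint collections of random bits; hence $\Pr[A\cap B]=\Pr[A]\cdot\Pr[B]$.

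For $\Pr[A]$ I would simply invoke Fact~\ref{fact11}, which already yields $\Pr[A]>1/e$.

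For $\Pr[B]$ I would cast the situation as an instance of the random process from Lemma~\ref{lemma02}. In every active box each of the $\Delta$ stations transmits independently with probability $1/\Delta$, so the number of transmitters in such a box has expectation exactly $1$, matching the hypothesis of Lemma~\ref{lemma02}; the fixed positions of the stations inside those boxes form one admissible choice for the lemma's adversary, so the conclusion still applies. Because the algorithm is run with $d=d_{\alpha,\cN\alpha\eps/4,\gamma}$, Lemma~\ref{lemma02} gives that the expected maximum interference satisfies $I_d\le\cN\alpha\eps/4$. Markov's inequality then yields $\Pr[\,\text{interference}>\cN\alpha\eps/2\,]\le 1/2$, i.e.\ $\Pr[B]\ge 1/2$. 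Combining, $\Pr[A\cap B]>(1/e)\cdot(1/2)=1/(2e)$, as required.

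The only point requiring real care, and the main obstacle, is the translation used for $\Pr[B]$: one has to check that the ``maximum interference in boxes at distance at most $1$ from $C(0,0)$'' analyzed in Lemma~\ref{lemma02} does upper-bound the interference in the boxes connected to $C(i,j)$ in the communication graph, and that replacing the random number of transmitters per active box (which has mean $1$) by an arbitrary adversarial placement is legitimate. Neither step introduces a new probabilistic idea; once this bookkeeping is done, the remaining argument is just Fact~\ref{fact11}, Lemma~\ref{lemma02}, Markov's inequality, and the independence of the coin flips across boxes.
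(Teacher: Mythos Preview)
Your proof is correct and follows essentially the same route as the paper: bound the interference event via Lemma~\ref{lemma02} and Markov's inequality, bound the single-transmitter event via Fact~\ref{fact11}, and multiply. The only difference is that you make the independence of the two events (coin flips inside $C(i,j)$ versus those in other active boxes) explicit, whereas the paper simply writes $P_2\ge\frac{1}{e}(1-P_1)$ without naming that step.
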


\def\FACTTWELVE{%
\noindent
{\bf\em Proof of Fact~\ref{fact12}:}
By Markov inequality and Lemma \ref{lemma02} we can bound the probability
that the maximum interference from
boxes different than $C(i,j)$ measured in boxes in distance at most $1$ to $C(i,j)$
exceeds ${\cal N}\alpha\eps/2$.
We take advantage of the equality $d=d_{\alpha,{\cal N}\alpha\eps/4,\gamma}$.
\[
P_1\le \frac{E(\text{max interference})}{{\cal N}\alpha\eps/2}\le \frac{1}{2}
\ .
\]
Thus, by the Fact~\ref{fact11}, the probability $P_2$ that exactly one node in $C(i,j)$ transmits 
and is heard in the distance at most $1-\eps/2$ is bounded as follows
\[
P_2\ge\frac{1}{e}(1-P_1)\ge \frac{1}{2e}
\ .
\]  
}


\begin{lemma}
\labell{lemma13}
Consider a Bernoulli scheme with success probability $p<1-\ln 2$.
The probability of obtaining at most $D$ successes in
$2D/p + 2\ln(1/\delta) / p$ trials is smaller than $(D+1)\delta$.
\end{lemma}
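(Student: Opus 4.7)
The plan is to bound the left tail $\Pr[X \le D]$ where $X \sim \mathrm{Bin}(n,p)$ with $n = 2D/p + 2\ln(1/\delta)/p$, by controlling the binomial PMF $f(k) = \binom{n}{k}p^k(1-p)^{n-k}$ directly rather than through a black-box Chernoff bound. The key observation is that the mean $np = 2D + 2\ln(1/\delta)$ lies strictly above $D$, so on the range $\{0,1,\ldots,D\}$ the PMF is unimodally non-decreasing and attains its maximum at $k=D$. This yields
\[
\Pr[X \le D] \;\le\; (D+1)\,f(D),
\]
which already accounts for the $(D+1)$ factor in the conclusion. What remains is the inequality $f(D) \le \delta$.

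For this, I will apply the textbook estimates $\binom{n}{D} \le (en/D)^D$ (via $D! \ge (D/e)^D$) together with $(1-p)^{n-D} \le e^{-p(n-D)}$, which together give $f(D) \le (enp\,e^p/D)^D \, e^{-np}$. Plugging in $np = 2(D + \ln(1/\delta))$ splits this into $(2e^{1+p})^D \cdot (1 + \ln(1/\delta)/D)^D \cdot e^{-2D}\,\delta^2$. Using the classical inequality $(1 + x/D)^D \le e^x$ with $x = \ln(1/\delta)$ turns the middle factor into $1/\delta$, so after simplification
\[
f(D) \;\le\; (2e^{\,p-1})^D\,\delta.
\]
The hypothesis $p < 1 - \ln 2$ is then invoked for exactly one purpose: it forces $e^{p-1} < e^{-\ln 2} = 1/2$, so $2e^{p-1} < 1$, hence $(2e^{p-1})^D \le 1$ and $f(D) \le \delta$. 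Combined with the earlier bound, this gives $\Pr[X \le D] \le (D+1)\delta$.

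I expect the main obstacle to be not any particular calculation but two interlocked design choices. First, one must notice that the cleanest path is to replace the sum of $D+1$ binomial terms by $(D+1)$ times the largest one, rather than attempting a multiplicative Chernoff bound (which produces awkward expressions since $D$ is not a fixed fraction of $np$). Second, one must track how the constants $2$ appearing in both $2D/p$ and $2\ln(1/\delta)/p$ are tuned precisely so that the leftover exponential base is $2e^{p-1}$, which the hypothesis on $p$ drives below $1$ with no slack. Once these choices are in place, everything reduces to Stirling together with the standard $(1-p)^t \le e^{-pt}$ and $(1+x/D)^D \le e^x$.
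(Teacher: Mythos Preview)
Your proposal is correct and follows essentially the same route as the paper's proof: both bound $\Pr[X\le D]$ by $(D+1)$ times the maximal term $f(D)$ via PMF monotonicity below the mean, then control $f(D)$ using $\binom{n}{D}\le (en/D)^D$, $(1-p)^{n-D}\le e^{-p(n-D)}$, and $(1+x/D)^D\le e^x$, with the hypothesis $p<1-\ln 2$ used exactly to force $2e^{p-1}<1$. The only difference is cosmetic: the paper parametrizes $pt=2D+x$ and simplifies at the end, while you substitute $np=2D+2\ln(1/\delta)$ up front; the algebra is identical.
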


\def\LEMMATHIRTEEN{%
\noindent
{\bf\em Proof of Lemma~\ref{lemma13}:}
Let $t=2D/p + 2\ln(1/\delta) / p$.
If $0\le i\le D$, then by the inequality $1+x\le e^x$
\begin{eqnarray*}
\lefteqn{\Pr(\text{exactly $i$ successes})= {t \choose i} p^i (1-p)^{t-i}<}\\
     &&{t \choose D} p^D (1-p)^{t-D}<t^D e^D D^{-D} p^D e^{pD-pt}
\ .
\end{eqnarray*}
So
\[
\Pr(\text{at most $D$ successes})=
    (D+1) t^D e^D D^{-D} p^D e^{pD-pt}
\ .
\]
Since $p<1-\ln 2$, then $2^D e^{pD} e^{-D}<1$.
Let  $pt=2D+x$.
We have
{\setlength\arraycolsep{0.1em}
\begin{eqnarray*}
\Pr(\text{$\le D$ successes}) &< &
    (D+1) (2D+x)^D e^D D^{-D} e^{pD-2D-x}\\
     &=&
    (D+1) \left(1+\frac{x}{2D}\right)^D 2^D e^{pD} e^{-D} e^{-x}\\
    &<&
    (D+1) e^{x/2} e^{-x} = (D+1) e^{-x/2}
\ .
\end{eqnarray*}}
When $x=2\ln(1/\delta) / p$ we have
\[
\Pr(\text{at most $D$ successes})< (D+1)\delta
\ .
\]
\qed
}

We say that a subset of nodes $W$ of graph $G$ is an {\em $l$-net} if 
any other node in $G$ is in distance at most $l$
from the closest node in $W$.


\begin{fact}\labell{fact14}
If $G$ is of eccentricity $D$, then there exists a $(1-\eps)$-net
$W$ of cardinality at most $4(D+1)^2$.
\end{fact}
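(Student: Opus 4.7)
The plan is to build $W$ by a greedy maximal $(1-\eps)$-packing and then bound its size by a disk-area argument. First I would fix a node $v_0$ (the source, say) whose eccentricity in $G$ is at most $D$. Every edge of the communication graph has Euclidean length at most $1-\eps$, and every node of $G$ is reachable from $v_0$ within $D$ hops, so every node lies in the closed Euclidean disk $B(v_0,D(1-\eps))$.

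Next I would construct $W$ greedily: start with $W=\{v_0\}$ and, as long as some node $u$ has Euclidean distance strictly greater than $1-\eps$ from every current member of $W$, add such a $u$ to $W$. The procedure terminates because the vertex set is finite, and upon termination every node of $G$ lies within Euclidean distance at most $1-\eps$ of some member of $W$; hence $W$ is a $(1-\eps)$-net. By the stopping rule, any two distinct members of $W$ are at mutual Euclidean distance strictly greater than $1-\eps$.

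To bound $|W|$ I would use a packing-area argument. Place an open Euclidean disk of radius $(1-\eps)/2$ around each member of $W$; these disks are pairwise disjoint, since their centers are more than $1-\eps$ apart, and each is contained in the enlarged disk $B\bigl(v_0,(D+\tfrac12)(1-\eps)\bigr)$. Comparing areas gives
\[
|W|\cdot\pi\!\left(\frac{1-\eps}{2}\right)^{\!2} \;\le\; \pi\!\left(\bigl(D+\tfrac12\bigr)(1-\eps)\right)^{\!2},
\]
whence $|W|\le(2D+1)^2\le 4(D+1)^2$, as claimed.

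The only slightly delicate step is the opening observation that eccentricity $D$ in the communication graph implies all nodes lie within Euclidean distance $D(1-\eps)$ of $v_0$; once that containment is in hand, everything else is a textbook greedy packing plus area comparison, and the constants fall out exactly.
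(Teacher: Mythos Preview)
Your proof is correct and follows essentially the same approach as the paper: a greedy maximal $(1-\eps)$-packing yields the net, and a disk-area comparison bounds its size. The only cosmetic difference is that you use the slightly tighter containment radius $(D+\tfrac12)(1-\eps)$ rather than the paper's $(D+1)(1-\eps)$, giving the intermediate bound $(2D+1)^2$ before relaxing to $4(D+1)^2$.
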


\def\FACTFOURTEEN{%
\noindent
{\bf\em Proof of Fact~\ref{fact14}:}
Let $q=1-\eps$.
Ranges $q$ of all the stations must be
all inside the circle of radius $(D+1)q$.
The area of this circle is $\pi(D+1)^2 q^2$.
Let us greedily pick a maximal set of nodes
such that any two nodes are in distance at
least $q$. This set is a $q$-net $W$.
Let us estimate the cardinality of $W$.
All the circles of radius $q/2$ and center
belonging to $W$ are disjoint and have areas $\pi q^2$.
They have total area at most  $\pi(D+1)^2 q^2$,
so $|W|\le 4(D+1)^2$.
\qed
}

Using the above results we conclude the analysis.

\begin{theorem}
\labell{theorem15}
Algorithm RandBroadcast$(\Delta,d,T)$ completes broadcast in any $n$-node network in time
$O(d^2(D+\log(1/\delta)))$ with probability $1-\delta$,
for $d=d_{\alpha,{\cal N}\alpha\eps/4,\gamma}$ and some $T=O(D+\log(1/\delta))$.
\end{theorem}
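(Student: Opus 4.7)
The plan is to combine three ingredients already assembled in the paper: the per-meta-round success probability of $1/(2e)$ from Fact~\ref{fact12}, the Bernoulli tail bound of Lemma~\ref{lemma13}, and the $(1-\eps)$-net $W$ of size at most $4(D+1)^2$ from Fact~\ref{fact14}. Since each meta-round of the outer loop of RandBroadcast contains $d^2$ sub-rounds, it suffices to show that $T=O(D+\log(1/\delta))$ meta-rounds make broadcast complete with probability at least $1-\delta$.

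First, I would handle the pre-loop transmission by the source. Being the sole active transmitter, the source's signal is undisturbed, so every station within Euclidean distance $r=1$ receives the message; in particular, the source's box of $G_\gamma$ becomes \emph{fully informed} (every station inside it is informed). Next, fix a node $w\in W$ and a BFS path $s=u_0,u_1,\ldots,u_k=w$ of length $k\le D$ in the communication graph, and build a chain of non-empty boxes $B_0,B_1,\ldots,B_m$ with $B_0$ the source's box, $B_m$ the box of $w$, $m=O(D)$, such that whenever $B_{i-1}$ is fully informed, a single successful transmission from it (in the sense of Fact~\ref{fact12}) makes $B_i$ fully informed. Because by Fact~\ref{fact01} the reception disk has radius $1-\eps/2$, the chain condition reduces to the maximum Euclidean distance between $B_{i-1}$ and $B_i$ being at most $1-\eps/2$; and since every BFS hop spans at most $1-\eps$, at most two chain hops per BFS edge realise this for $\eps<1/2$.

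Unrolling the chain, by Fact~\ref{fact12} in each meta-round the currently fully-informed frontier box independently produces a chain-advancing successful transmission with probability at least $p=1/(2e)$. Since $p<1-\ln 2$, Lemma~\ref{lemma13} with $D$ replaced by $m$ yields that in $T=2m/p+2\ln(1/\delta')/p=O(D+\log(1/\delta'))$ meta-rounds the chain is completely traversed and $w$ is informed, with failure probability at most $(m+1)\delta'$. Taking $\delta'=\delta/(16(D+1)^3)$ and union-bounding over $w\in W$, every $w\in W$ is informed within $T$ meta-rounds (and $B(w)$ is fully informed) with probability at least $1-\delta/2$. Every remaining node $v'$ is, by the net property, within Euclidean distance $1-\eps$ of some $w\in W$, hence a communication-graph neighbor of $w$; once $B(w)$ is fully informed, one further successful transmission from $B(w)$ (which by Fact~\ref{fact12} occurs within $O(\log(|W|/\delta))=O(\log(D/\delta))$ extra meta-rounds with probability at least $1-\delta/(2|W|)$) informs $v'$, and a second union bound over $w\in W$ absorbs the other $\delta/2$. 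Both terms fit within $T=O(D+\log(1/\delta))$, so multiplying by $d^2$ yields the claimed $O(d^2(D+\log(1/\delta)))$ round bound.

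The main obstacle I foresee is precisely the chain construction: each intermediate box must be non-empty (an empty box cannot transmit), each hop must satisfy the ``fully-informing'' distance bound $1-\eps/2$, and the total length must remain $O(D)$. Using the boxes $C(u_0),\ldots,C(u_k)$ directly can fail because, with a box diameter of $\eps/2$ on each side, the maximum pairwise distance between $C(u_{j-1})$ and $C(u_j)$ may reach $(1-\eps)+\eps=1>1-\eps/2$. I would resolve this by subdividing each offending BFS hop with a waypoint at a non-empty box close to the midpoint of $[u_{j-1},u_j]$: the inequality $(1+\eps)/2\le 1-\eps/2$, equivalent to $\eps\le 1/2$, ensures that both resulting half-hops satisfy the maximum-distance condition, so the chain length at most doubles and stays $O(D)$.
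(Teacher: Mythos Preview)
Your overall architecture---the per-meta-round success probability $1/(2e)$ from Fact~\ref{fact12}, the tail bound of Lemma~\ref{lemma13}, the net $W$ from Fact~\ref{fact14}, and a union bound over $W$---is exactly the paper's approach; the paper is simply terser and asserts ``a chain of altogether at most $D+1$ successful transmissions'' along the shortest path without your box-level bookkeeping. You are also right that the $1/(2e)$ bound of Fact~\ref{fact12} (via Fact~\ref{fact11}) tacitly assumes all $\Delta$ stations of the box participate, so tracking \emph{fully informed} boxes is the honest way to invoke it; the paper's own proof glosses over this point.

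The gap is in your fix. You propose to insert ``a non-empty box close to the midpoint of $[u_{j-1},u_j]$'' whenever the direct box-to-box hop exceeds $1-\eps/2$, but nothing in the model guarantees such a box exists: if $u_{j-1}$ and $u_j$ are the only stations near that segment, every box near the midpoint is empty and cannot relay. Your inequality $(1+\eps)/2\le 1-\eps/2$ is correct arithmetic, but it presupposes a station to place in the intermediate box, and connectivity of the communication graph gives you no such station. So the chain construction, as you describe it, does not go through. (A clean repair---not carried out in the paper either---is to shrink the grid so the box diagonal is at most $\eps/4$; then $\eps/4+(1-\eps)+\eps/4=1-\eps/2$, a $(1-\eps/2)$-successful transmission from $B(u_{j-1})$ already reaches all of $B(u_j)$, and no waypoint is needed.)
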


\begin{proof}
To complete broadcasting it is enough that
all the boxes containing stations of the $(1-\eps)$-net $W$
transmit the message at least once
and are heard by all their neighbors.
Such a box containing $v\in W$ transmits successfully if
the message is successfully transmitted at most $D$ times
on the shortest path from the source to $v$ in $G$, and
finally is successfully transmitted by the box containing $v$.
The sufficient condition for this to happen is that a chain of
altogether at most $D+1$ successful transmissions heard
by all potential receivers occurs.
The probability of a successful transmission within this chain is 
bigger than $p=\frac{1}{2e}$, by Fact \ref{fact12}
(recall that Fact \ref{fact12} uses our assumption $d=d_{\alpha,{\cal N}\alpha\eps/4,\gamma}$).

Now we estimate the probability that algorithm RandBroadcast
completes the broadcast. Let the number of trials be
$T=2D/p + 2\ln(1/\delta') / p$, for some $\delta'\in\mathbb{R}$.
By Lemma \ref{lemma13}, Fact \ref{fact01} and Fact \ref{fact14}

\vspace*{-2ex}
\[
\Pr(\text{All $v\in W$ transmit successfully})
\ge
\]

\vspace*{-5ex}
\[
1-\sum_{v\in W} \Pr(\text{$v$ doesn't transmit successfully})
\ge
1-4(D+1)^3\delta'
\ .
\]

\vspace*{-1ex}
\noindent
This is bigger than $1-\delta$ for our choice of $T$.
Note also that $T=O(D+\log(1/\delta))$.
Because we have a trial every $d^2$ rounds,
we need altogether $O(d^2(D+\log(1/\delta)))$ rounds,
for $d=d_{\alpha,{\cal N}\alpha\eps/4,\gamma}$.
\end{proof}

\section{Experimental Results}




In the experiments we used two kinds of randomly generated networks: 
uniform and social, see Figure~\ref{fig:nets}
for examples. Each network was guaranteed to be strongly connected
(i.e., non-connected networks were removed).

Uniform networks were generated by adding random nodes
--- uniformly distributed on $S \times S$ size square --- 
until desired size $n$ was achieved.

Our social networks are generated in a way which accustoms modeling of graph-based social networks 
to geometrical constraints.
We divided the surface $S \times S$ into a grid of size $\epsilon \times \epsilon$. Each box $t_i$ in the grid was assigned a weight $w_i = |\{ v : v \mbox{ is in the distance of $2$ from any node in } t_i\}|$.
Before adding a new node to a network, we chose between two modes of addition.
With probability $p=0.9$ the first mode was chosen. 
Within this mode a box $t_i$ was chosen with probability proportional to its weight, 
and then a new node was located randomly in the selected box according to a uniform distribution. With probability $1-p=0.1$ we used the second mode, in which a new node on a random position within the square $S \times S$ is located. After adding a new node we update weights of boxes.
Nodes are added until the size $n$ is achieved. 

\begin{figure}[th!]
\begin{center}
\vspace*{-2ex}
\epsfig{file=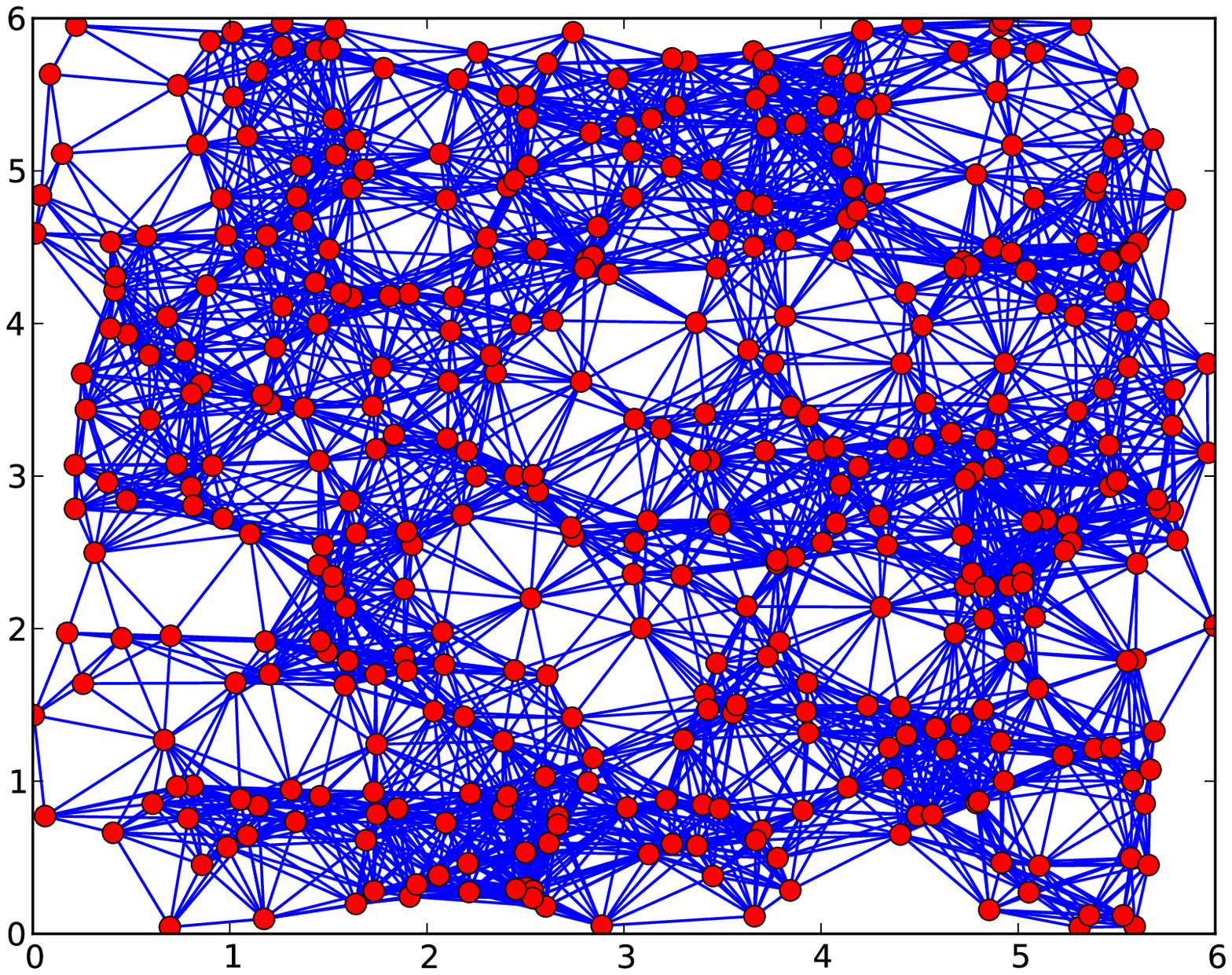, scale=0.44}
\\ \vspace{-2ex}
\epsfig{file=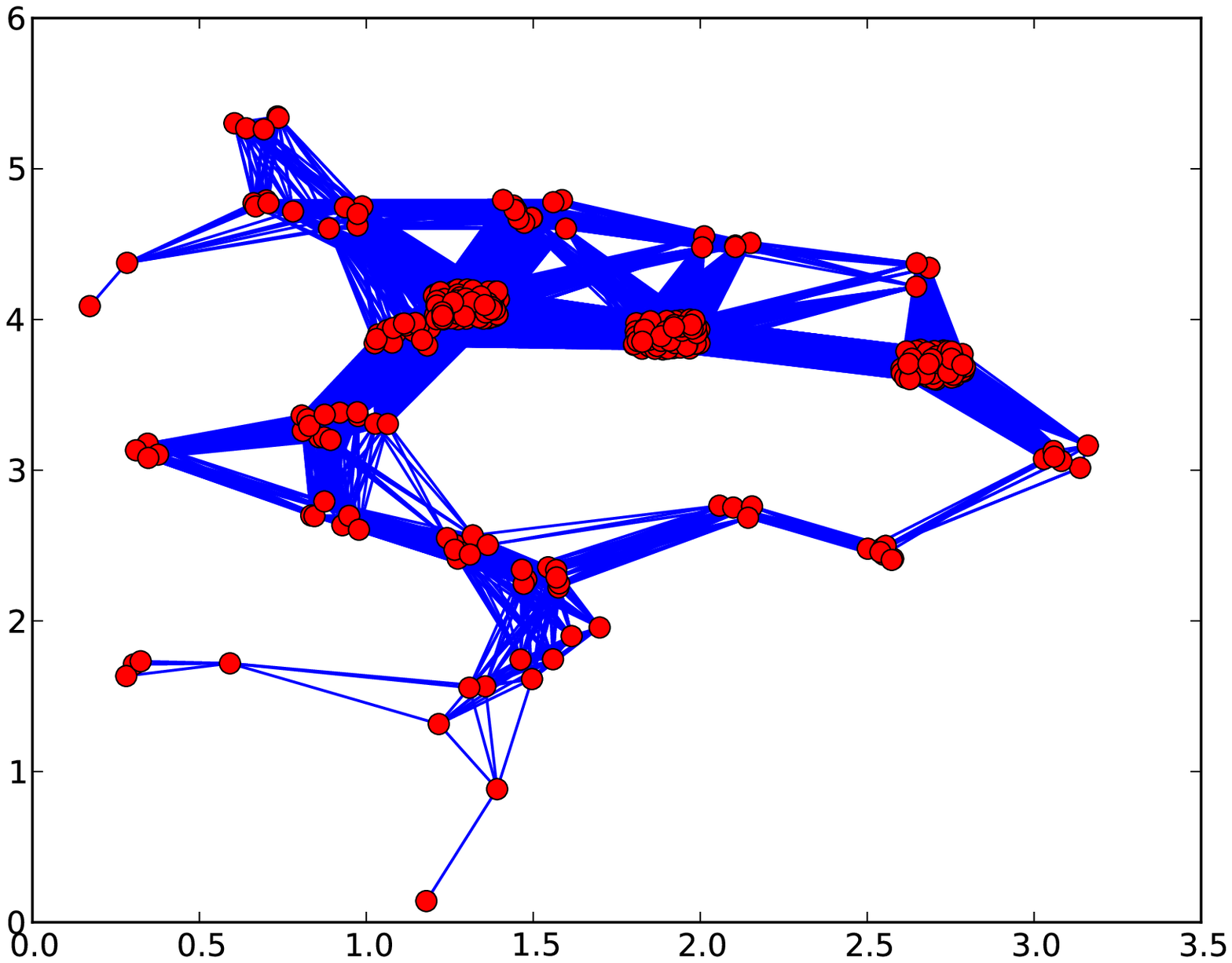, scale=0.44}
\end{center}
\vspace*{-4ex}
\caption{Examples of uniform (top) and social (bottom) networks with $n=400$ nodes, distributed
in a $6\times 6$ square.}
\label{fig:nets}
\vspace*{-2ex}
\end{figure}%

We tested the performance of algorithm RandBroadcast (Algorithm~\ref{alg:gran}) 
with parameter $d=10$
and compared it with the exponential backoff protocol.
To neutralize possible advantage of Algorithm~\ref{alg:gran} coming from the knowledge of
local density $\Delta$, 
we allowed backoff algorithm to use this knowledge as well to limit the number of iterations.
%
More precisely, each node that received the broadcast message, 
transmits the message in a random round of consecutive time periods ({\em windows}) of sizes: 
$2^0, 2^1, \ldots, 2^{\log\Delta}$. 
If the node receives acknowledgment of its message, 
it starts again with window of size $2^0$. 
If no new acknowledgment message is received in a sequence of windows of sizes 
$2^0,\ldots,2^{\log\Delta}$, the node terminates its execution of backoff protocol.

In our experiments we set the following set of parameters of the SINR model: 
$\alpha = 2.5$, $\mathcal{N} = 1$, $\beta = 1$, $\epsilon = 0.2$.


\tj{For each $n\in\{50,100,150,200, 400, 600, 800, 1000,1500,$ $2000\}$ and $S=6$, 
we generated $20$ networks with $n$ nodes located on a square of size $S\times S$.} 
Then, both algorithms were executed on each network. For each $n$,
we calculated average time over all $20$ networks generated with these parameters. 
Moreover, in order to check scalability of Algorithm~\ref{alg:gran} 
(whose asymptotic time complexity is proportional to $D$, the eccentricity of the source), 
we also present graphs illustrating average proportion of time complexity
and $D$. 
The results for uniform networks are presented on Figure~\ref{fig:uniform},
and the results for social networks are given on Figure~\ref{fig:social}.
The main conclusion is that exponential backoff protocol --- although very efficient
for networks with relatively small number of users (roughly below $600$),
is not scalable, while the average time performance of RandBroadcast is
away from the absolute lower bound $D$ by a small constant for uniform
and social networks of at least $1000$ nodes.

\begin{figure*}
\vspace*{-2ex}
\begin{center}
\epsfig{file=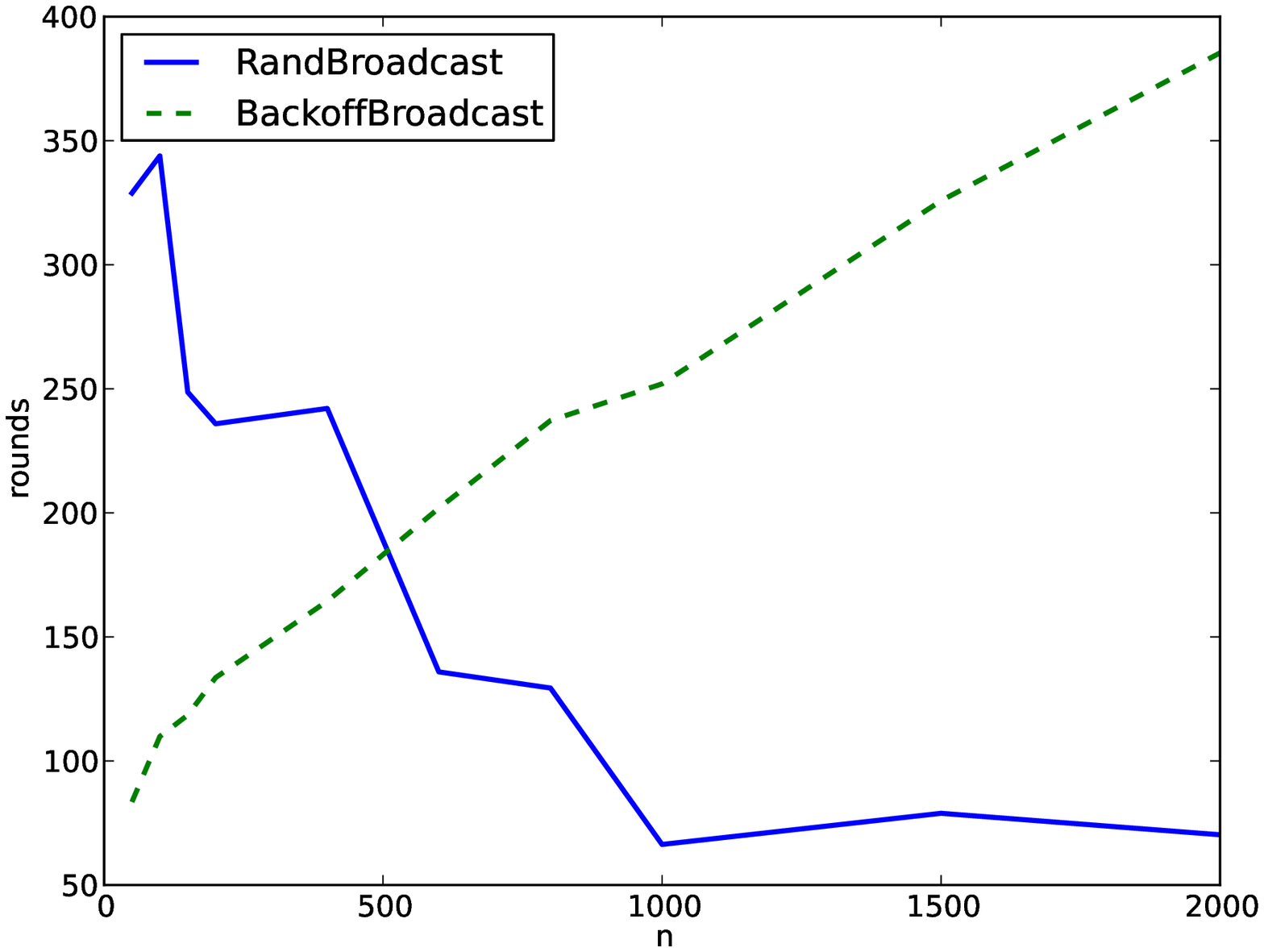, scale=0.43}
\epsfig{file=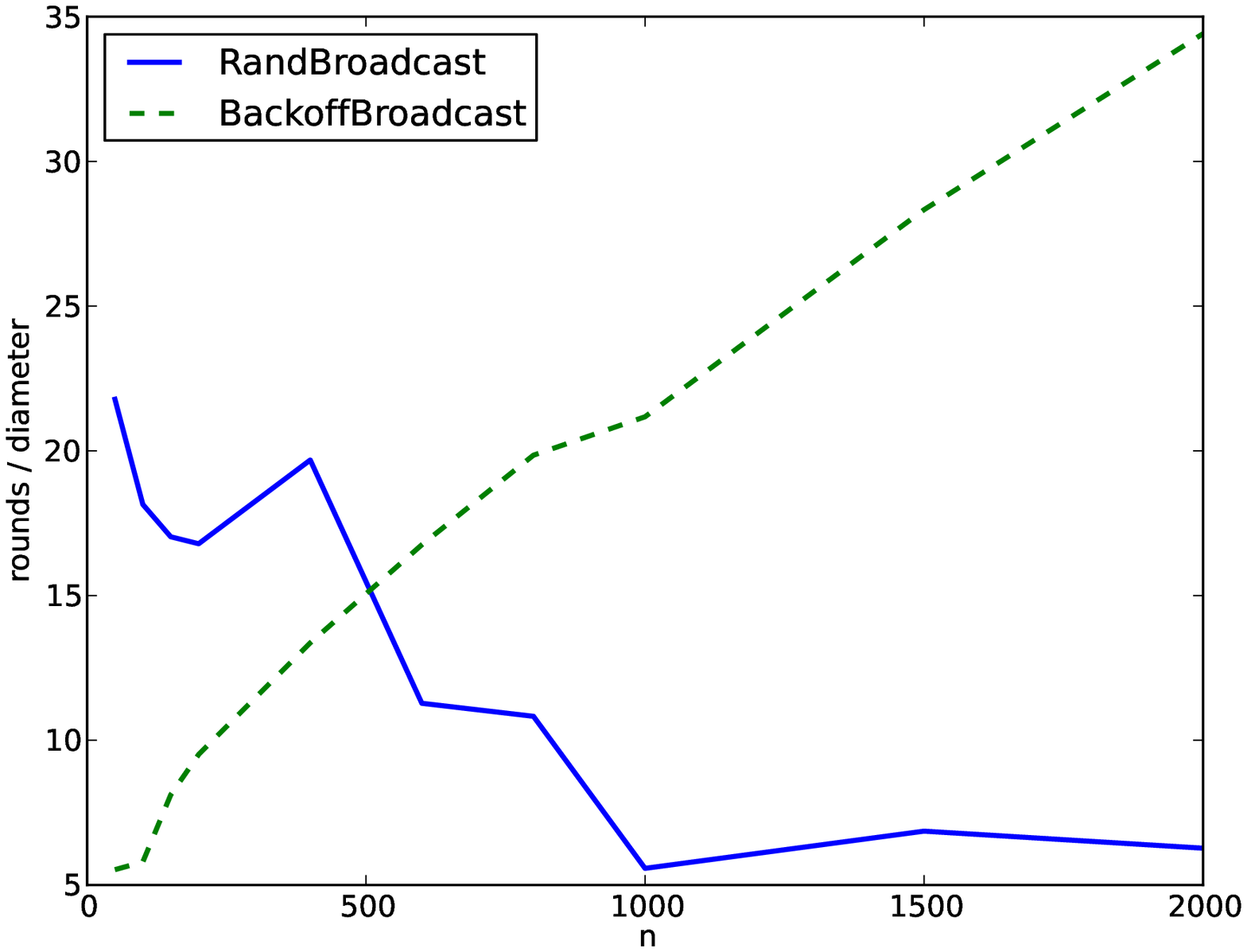, scale=0.43}
\end{center}
\vspace*{-4ex}
\caption{Simulation results for uniform networks: average time (left) and the ratio of time over diameter (right).}
\label{fig:uniform}
\end{figure*}

\begin{figure*}
\begin{center}
\vspace*{-3ex}
\epsfig{file=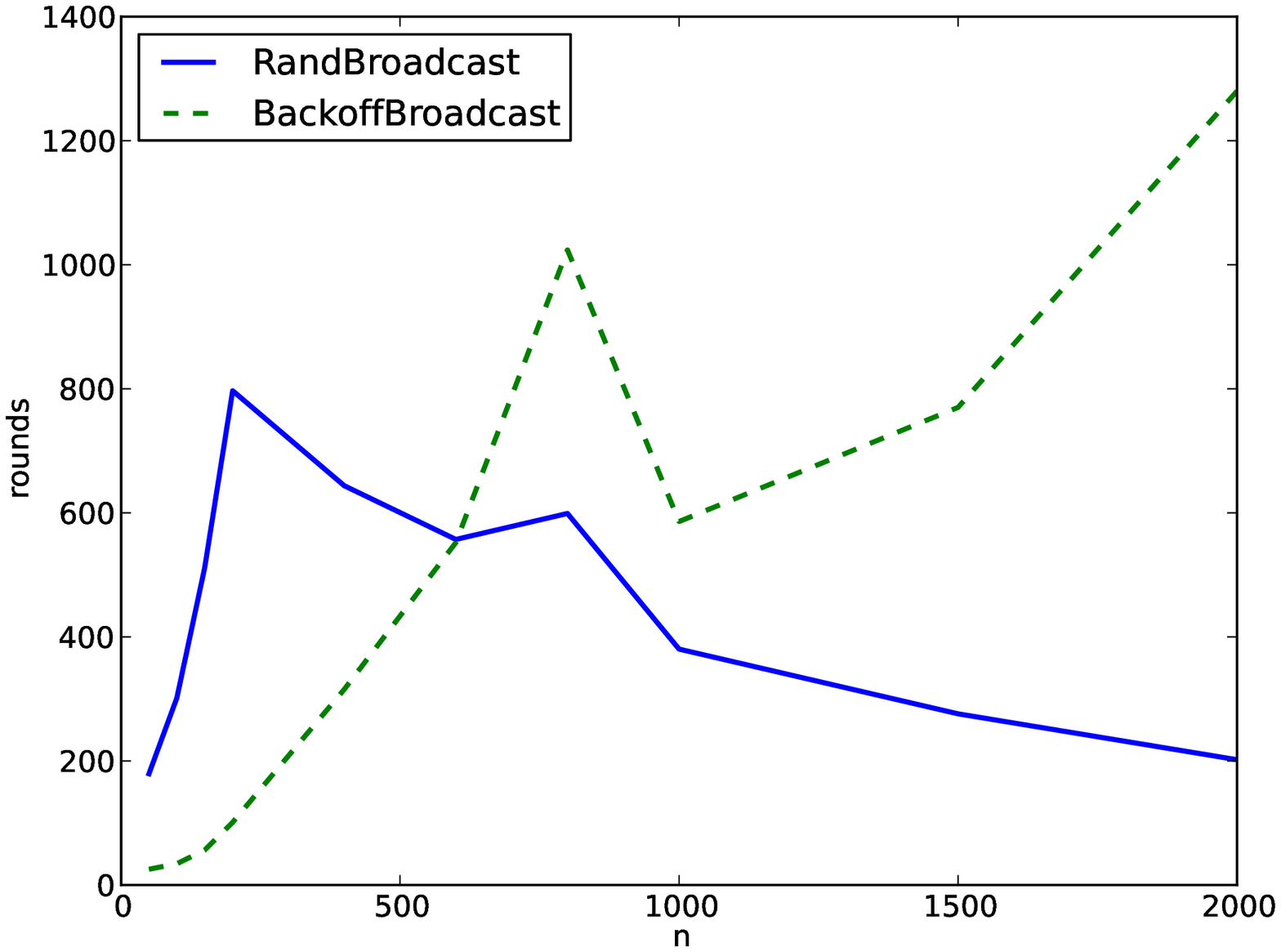, scale=0.43}
\epsfig{file=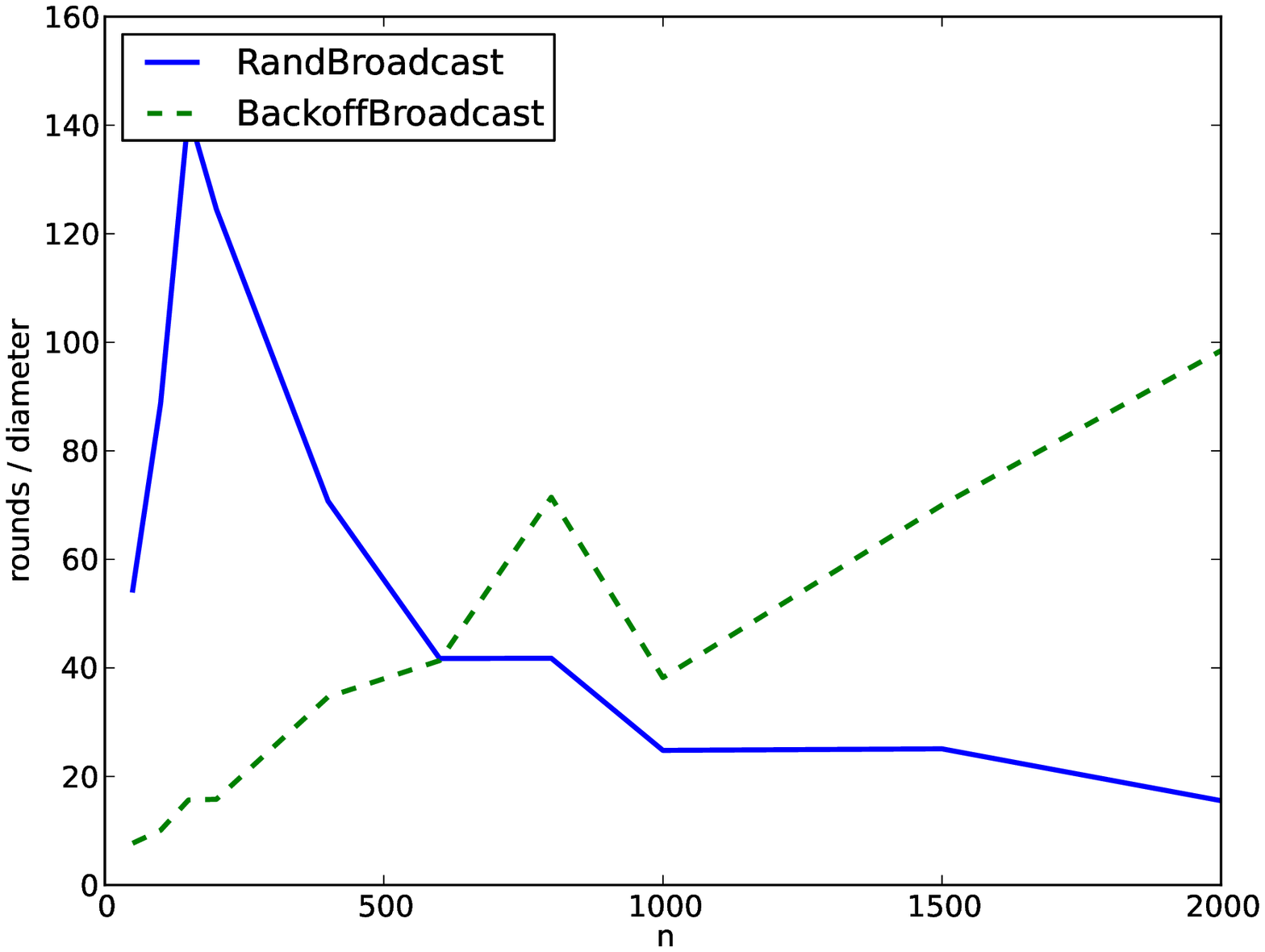, scale=0.43}
\vspace*{-4ex}
\end{center}
\caption{Simulation results for social networks: average time (left) and the ratio of time over diameter (right).}
\label{fig:social}
\vspace*{-3ex}
\end{figure*}


\section{Algorithms for Unknown Local Density}
\labell{s:dunknown}

In this section we describe our broadcasting algorithm
for networks of unknown local density.
To~construct this algorithm we consider the grid $G_\gamma$, where
$\gamma = \frac{\eps}{6\sqrt{2}}$.
Due to Fact~\ref{fact01} if the interference at the receiver does not exceed ${\cal N}\alpha\eps/2$,
then a node can hear the transmitter in the distance $1-\eps/2$.
Assume now, that we have two boxes $V$ and $U=C(i,j)$ and nodes $v,u$ such that
$v\in V,u\in U$ and $\{u,v\}\in G$.
In such a setting if a single node from $V$ transmits the message,
then it is heard by all stations in all boxes $C(i+a,j+b)$, where $a,b\in [-2,2]$.

For this section we modify the notion of boxes being adjacent.
Two boxes $V$ and $U$ are adjacent if the distance between them is at most $1-\eps/2$.
But with one exception -- boxes that are very close each to other are not adjacent.
More precisely the box $C(i,j)$ is not adjacent to
any box $C(i+a,j+b)$, where $a,b\in [-2,2]$.
Whenever we have two boxes $V=C(i_V,j_V)$ and $U=C(i_U,j_U)$ and nodes $u,v$ such that
$v\in V,u\in U$ and $\{u,v\}\in G$,
the box $V$ is adjacent to all boxes $C(i_U+a,j_U+b)$ where $a,b\in [-2,2]$
unless $C(i_U+a,j_U+b)=C(i_V+a',j_V+b')$ where $a',b'\in [-2,2]$.
In other words if $\{u,v\}\in G$, then $V$ is adjacent to all boxes that are too
close to $U$ to be adjacent to $U$, unless these boxes are also too close to $V$.

The {\em neighborhood} of a box $V$ is the set of all boxes $U$ adjacent to $V$.
To formulate the algorithm we have to define the {\em octant} of the
neighborhood of the box $V=C_{i,j}$.
In order to do it we place on the plane a Cartesian coordinate system
with the origin in the center of the box $V$.
This coordinate system is naturally subdivided into four {\em quadrants}
i.e.\ the plane areas bounded by two reference axes forming the $90\degree$ angle.
The quadrant can be divided by the bisector of this angle into two {\em octants}
corresponding to the angle of $45\degree$.
We attribute one of the rays forming the boundaries of the octants
to each octant, so that they are disjoint (and connected) as the subsets of the plane.
An {\em octant} of the neighborhood of $V$ is the set of all boxes $U$
in the neighborhood that have centers in a given octant of the coordinate system.

\begin{fact}\labell{fact21}
In the octant of the neighborhood of $V$ each two stations
are in the distance at most $(1-\eps/2)$.
\end{fact}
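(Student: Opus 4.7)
The plan is a direct geometric calculation. I place $c_V$ at the origin and, by the eight-fold rotational symmetry of octants, restrict attention to $O = \{(x,y) : x \ge 0,\; 0 \le y \le x\}$, which is a $45^\circ$ wedge; let $u \in U$ and $u' \in U'$ be two stations in boxes of this octant. Since each station lies within $\gamma\sqrt{2}/2$ of its box center, two applications of the triangle inequality give
\[
\dist(u,u') \;\le\; \dist(c_U, c_{U'}) + \gamma\sqrt{2}\,,
\]
so it suffices to bound $\dist(c_U, c_{U'})$.

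For this I collect three facts about the centers $c_U, c_{U'}$. First, both lie within distance $R := 1-\eps/2$ of $c_V$: for directly adjacent boxes this follows from max-distance$(V,U) \le R$, and for boxes added via the rule-3 extension (through a $G$-edge between $V$ and some intermediate $U^\circ$) the stronger edge condition $\dist(u,v) \le 1-\eps$, the extension offset $|c_U - c_{U^\circ}| \le 2\sqrt{2}\gamma$, and the identity $3\sqrt{2}\gamma = \eps/2$---exactly the ``budget'' arranged by the choice $\gamma=\eps/(6\sqrt{2})$---combine to give $|c_U| \le R$ again. Second, the $5\times 5$ exclusion forces the integer grid offset of $U, U'$ from $V$ to have $\ell^\infty$-norm at least $3$, so $|c_U|, |c_{U'}| \ge 3\gamma$. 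Third, both centers lie in $O$, so the angle $\theta = \angle c_U c_V c_{U'}$ satisfies $\theta \le 45^\circ$. The law of cosines $\dist(c_U,c_{U'})^2 = |c_U|^2 + |c_{U'}|^2 - 2|c_U||c_{U'}|\cos\theta$ is monotone in $\theta$; at $\theta=45^\circ$ a short calculus argument shows the maximum over $|c_U|,|c_{U'}| \in [3\gamma, R]$ occurs at the asymmetric corner $(R, 3\gamma)$, yielding
\[
\dist(c_U,c_{U'}) \;\le\; \sqrt{R^2 + 9\gamma^2 - 3\sqrt{2}\gamma R}\,.
\]

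It remains to verify $\sqrt{R^2 + 9\gamma^2 - 3\sqrt{2}\gamma R} + \gamma\sqrt{2} \le R$. Since $R > \gamma\sqrt{2}$ for $\eps<1$, I may square the equivalent inequality $\sqrt{\,\cdot\,} \le R - \gamma\sqrt{2}$, which after cancellation reduces to $7\gamma \le \sqrt{2}\,R$; substituting $\gamma=\eps/(6\sqrt{2})$ and $R=1-\eps/2$ turns this into $\eps \le 12/13$, easily satisfied for $\eps<1/2$. The most delicate step is the rule-3 bookkeeping in the first ingredient above, where the specific value of $\gamma$ is essential; the rest is routine coordinate geometry.
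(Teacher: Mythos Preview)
The paper states Fact~\ref{fact21} without proof and does not list it among the deferred proofs in the appendix; it is evidently treated as a routine geometric observation. Your argument supplies a correct and complete verification. The three ingredients you isolate are exactly what is needed: (i) the bound $|c_U|\le R$, where in the extension case the chain $|c_U|\le 2\sqrt{2}\gamma + \gamma\sqrt{2} + (1-\eps) = (1-\eps)+3\sqrt{2}\gamma = 1-\eps/2$ uses precisely the identity $3\sqrt{2}\gamma=\eps/2$ built into the choice $\gamma=\eps/(6\sqrt{2})$; (ii) the lower bound $|c_U|\ge 3\gamma$ from the $5\times 5$ exclusion; and (iii) the $45^\circ$ cap on the angle at $c_V$. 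The law-of-cosines objective $r_1^2+r_2^2-\sqrt{2}\,r_1r_2$ is convex (its Hessian has eigenvalues $2\pm\sqrt{2}>0$), so the maximum over $[3\gamma,R]^2$ is attained at a corner; your identification of $(R,3\gamma)$ as the maximizer and the reduction of the final inequality to $7\gamma\le\sqrt{2}R$, i.e.\ $\eps\le 12/13$, are both correct.
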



\begin{algorithm*}[t!]
	\caption{RandUnknownBroadcast$(d,T)$}
	\label{alg:unknown}
	\begin{algorithmic}[1]
    \State the source $s$ transmits and becomes the leader of its box of $G_{\gamma}$
    \For{counter$\gets 1,2,\ldots,T$}
        \For{each $a,b:0\le a,b<d$}
            \If{$v$ is the leader of $V=C(i,j):(i,j) \equiv (a,b) \mod d$}
                $v$ transmits
            \EndIf
        \EndFor
        \For{each $a,b:0\le a,b<\bar{d}$}
            \For{each octant of the neighborhood of $V=C(i,j):(i,j) \equiv (a,b) \mod \bar{d}$}
                \State $U\gets$ box in the octant with a leader of lexicographically smallest coordinates
                \State $u\gets$ leader od $U$
                \State conflict$(v)\gets$ false
                \For{$k=0,1,2,3,...,\log n$}
                    \While{$U$ exists and $V$ has no leader and not conflict$(v)$}
                        \State K1: Each vertex $v\in V$ transmits with the probability $(1/n)2^k$
                        \State K2: {\bf if} $u$ hears $v$ in K1 {\bf then} $u$ transmits ``$v$'' and $v$ becomes the leader
                        \State \phantom{K2: }{\bf if} $v$ transmitted in K1 and hears nothing in K2 {\bf then} conflict$(v)\gets$ true
                        \State K3: nodes $v$ transmitting in K1 and $u$ transmit
                        \State \phantom{K3: }{\bf if} $v$ not transmitting in K1 does not hear $u$ {\bf then} conflict$(v)\gets$ true
                    \EndWhile
                \EndFor
            \EndFor
        \EndFor
    \EndFor
    \end{algorithmic}
\end{algorithm*}

We should add a couple of words of explanation to our algorithm.
In the algorithm $d=d_{\alpha,{\cal N}\alpha\eps/2,\gamma}$ and 
$\bar{d}=\left\lfloor 1/\gamma\right\rfloor d_{\alpha,{\cal N}\alpha\eps/28,\gamma\lfloor 1/\gamma\rfloor}$.
The algorithm consists of $T$ iterations of the most external loop.
Each of these iterations consists of two parts.
The first part is a deterministic broadcast from the leaders of the boxes
to all nodes in the distance at most $1-\eps/2$ from these leaders.
It is assumed that new vertices are woken up only in the very beginning and
in the first part.
The second part is a probabilistic algorithm attempting to elect the leaders
in all the boxes in which the message was heard in the first part
and which currently do not have leaders.
To make such an attempt in the box $V$ some help
from the leader of a box $U$ adjacent to $V$ is needed.
This attempt is made separately for each octant.
Within an octant the leaders hear each other in the first part,
so they all can say without any additional communication
which of them has lexicographically smallest coordinates.
Also any vertex in $V$ knows whether any leader in an octant exists.

The loop ``{\bf for} $k$'' assures that in round K1 the transmission
probability grows twice per iteration starting from $1/n$.
Rounds K2 and K3 are designed so that they ``switch off''
till the end of the loop ``{\bf for} $k$''
all nodes of $V$ when any of them transmits in K1.
This assures, that the expected interference caused by the computation in $V$ is small.
There are three possible outcomes of the round K1.
One of them is that no vertex transmits in K1.
In such a case all nodes in $V$ hear $u$ in K3 and $k$ is incremented
unless the interference jams $u$ in K3 and nodes in $V$ switch off.
The next possibility is that exactly one node $v\in V$ transmits on K1.
In such a case all nodes in $V$ are notified that $v$ is the leader in K2
unless the interference jams $v$ in K1 and nodes in $V$ switch off in K2 and K3.
Note that no vertex of $V$ can hear $u$ in K3 because $v$ is closer to this
vertex than $u$.
The last possibility is that more than one node of $V$ transmits on K1.
We have two subcases.
The first subcase is that one of these nodes in $V$
is heard by $u$ (can happen for some $\beta,{\cal N},P,\alpha$) and this vertex becomes elected in K2.
The second subcase is that $u$ hears nothing in K1.
Again no vertex of $V$ can hear $u$ in K3 because the transmitting nodes in $V$
are closer to this vertex than $u$.

Now we prove an analog of Fact~\ref{fact14} for our algorithm.

\begin{lemma}\labell{lemma22}
Let $G$ be of the eccentricity $D$.
There exists a set of boxes $W$ of the grid $G_\gamma$
of cardinality at most $4(D+1)^2$ having the two following properties
\begin{itemize}
\item if we choose one station from each box then these stations form
      a $(1-\eps/2)$-net in the set of all the stations,
\item for each box of $W$ there exists a sequence of at most $D+1$ boxes
      beginning from the box containing the source
      and ending in this box such that two consecutive boxes are adjacent.
\end{itemize}
\end{lemma}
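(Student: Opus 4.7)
The plan is to adapt the proof of Fact~\ref{fact14} to boxes of $G_\gamma$, combining a greedy $(1-\eps/2)$-net for the size bound with a per-box path construction for the adjacency-reachability property.

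For the first bullet, I would greedily pick a maximal set $W' \subseteq V$ whose pairwise Euclidean distances are at least $1-\eps/2$; by maximality, $W'$ is a $(1-\eps/2)$-net in the station set. Because the eccentricity of the source in $G$ equals $D$ and each $G$-edge has length at most $1-\eps$, every station lies in a disk of radius $D(1-\eps)$ around the source. The same area comparison as in Fact~\ref{fact14}, applied to disjoint open disks of radius $(1-\eps/2)/2$ around the elements of $W'$ contained in a disk of radius $D(1-\eps)+(1-\eps/2)/2$, yields $|W'|\le 4(D+1)^2$. Since the diagonal of a box of $G_\gamma$ equals $\gamma\sqrt{2}=\eps/6<1-\eps/2$, distinct elements of $W'$ live in distinct boxes, and taking $W$ to be the set of boxes containing elements of $W'$ gives $|W|=|W'|$ together with the $(1-\eps/2)$-net property upon selecting one station per box.

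For the second bullet, I would fix $w\in W'$ with box $B_w$, take a shortest $G$-path $s=v_0,\ldots,v_k=w$ of length $k\le D$, and convert the induced sequence $B_0,\ldots,B_k$ of boxes of the $v_i$'s into a valid box-adjacency sequence. Three cases arise for each consecutive pair $B_i,B_{i+1}$: (a) equal boxes collapse; (b) distinct and not close (coordinate difference outside $[-2,2]^2$), in which case the basic rule of adjacency applies because the box distance is at most $\text{dist}(v_i,v_{i+1})\le 1-\eps<1-\eps/2$; (c) distinct but close, where neither rule connects $B_i$ to $B_{i+1}$ directly, so I interpose an auxiliary box $B'$ placed several rows or columns away from both (e.g., $C(i_{B_i},j_{B_i}+5)$), chosen so that $B'$ is not close to either endpoint while still at box distance $O(\gamma)\ll 1-\eps/2$ from each; two applications of the basic rule then give $B_i\to B'\to B_{i+1}$.

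The main obstacle is bounding the length of the resulting sequence by $D+1$. The naive accounting doubles each case (c) transition and yields only $2D+1$ boxes. Closing the gap to $D+1$ requires a finer argument: either re-choosing the $G$-path and the auxiliary boxes so that each detour box can be identified with a box already present in the sequence (so close transitions cost no extra steps), or amortizing the inserted detours against longer, strictly-advancing transitions elsewhere on the $G$-path. If no such tightening goes through, the weaker bound of $2(D+1)$ boxes is still enough to carry the subsequent round-complexity analysis of Algorithm~\ref{alg:unknown} with only a constant-factor loss.
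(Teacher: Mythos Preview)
Your construction has two genuine gaps, one in each bullet.

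\textbf{First bullet.} You build $W'$ as a maximal $(1-\eps/2)$-separated set, hence a $(1-\eps/2)$-net, and then claim that choosing \emph{any} station from each corresponding box still gives a $(1-\eps/2)$-net. This does not follow: replacing a point of $W'$ by another station in its box may shift it by up to the box diagonal $\gamma\sqrt{2}=\eps/6$, so you only get a $(1-\eps/2+\eps/6)=(1-\eps/3)$-net. The paper avoids this by re-using the $(1-\eps)$-net of Fact~\ref{fact14}: with $\eps/2$ of slack instead of zero, any station in the box of a net point is still within $(1-\eps)+\eps/6<1-\eps/2$ of everything that point covered. The cardinality bound $4(D+1)^2$ is then inherited verbatim from Fact~\ref{fact14}.

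\textbf{Second bullet.} You try to repair the ``too close'' pairs by \emph{inserting} auxiliary boxes, and then struggle to keep the sequence length at $D+1$; you also do not control whether the inserted box contains any station (if it is empty, the lemma becomes useless for Theorem~\ref{theorem25}, whose chain argument needs a leader elected in every box of the sequence). The paper goes the opposite way: when $V_k$ and $V_{k+1}$ are too close, it \emph{deletes} $V_{k+1}$. Since any point of $V_{k+1}$ lies within $3\sqrt{2}\gamma=\eps/2$ of any point of $V_k$, and $\text{dist}(v_{k+1},v_{k+2})\le 1-\eps$, one gets that $V_k$ and $V_{k+2}$ are within $1-\eps/2$; and they cannot themselves be too close, as that would put $v_k$ and $v_{k+2}$ within $1-\eps$ and contradict the shortest-path choice. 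If the deleted box was the last one, $V_k$ already serves as the terminal box (same $\eps/2$ slack argument). Deletions only shorten the sequence, so the $D+1$ bound is immediate and every surviving box contains one of the $v_i$. Your detour idea cannot be salvaged to reach $D+1$ without essentially reproducing this deletion argument.
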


\def\LEMMATWENTYTWO{
\noindent
{\bf\em Proof of Lemma~\ref{lemma22}:}
In Fact~\ref{fact14} we proved, that in $G$ exists a $(1-\eps)$-net $W'$
of cardinality at most $4(D+1)^2$.
Obviously there exist paths from the source to each node of $W'$
of the lengths at most $D$.
We now say how one can obtain $W$ having $W'$.
Let us consider one of the nodes $v$ belonging to $W'$.
There is a path $v_0,v_1,v_2,\ldots,v_d=v$ where
$v_0$ is the source and  $d\le D$.
We assume this is the shortest path from $v_0$ to $v$.

We can replace each vertex $v_i$ of the path by its box $V_i$
obtaining a sequence of at most $D+1$ boxes.
Any vertex of $V_d$ is in the distance at most $1-\eps/2$ from
any vertex in the distance at most $1-\eps$ from $v$
(it can replace $v$ in $W'$ as a $(1-\eps/2$)-net).
We show how we can modify the sequence $V_i$
so that all subsequent boxes are adjacent.
Assume some are not adjacent,
which can happen when they are too close each to other.
That is $V_k=C(i,j)$ and $V_{k+1}=C(i+a,j+b)$ where $a,b\in [-2,2]$.
In such a case we remove the box $V_{k+1}$ from the sequence.
If $v\not\in V_{k+1}$, then the boxes $V_k$ and $V_{k+2}$ are adjacent
because any vertex of $V_{k+2}$ is still in the distance at most $1-\eps/2$
from any vertex of $V_k$.
If boxes $V_k$ and $V_{k+2}$ were too close each to other then it
would contradict that $v_0,v_1,v_2,\ldots,v_d$ is the shortest path in $G$.
If $v\in V_{k+1}$, then we note that any vertex of $V_k$
is in the distance at most $1-\eps/2$ from
any vertex in the distance at most $1-\eps$ from $v$.
Thus we can define $W$ to be the set of the last boxes of
the modified sequences of boxes.
\qed
}

We should estimate what is the average maximal number of
stations transmitting in the box $C(i,j)$.

\begin{fact}\labell{fact23}
The expected value of the maximal number of
stations transmitting in the box $C(i,j)$ in round K1
of during one call of the loop ``{\bf for} $k$''
is at most 6.
\end{fact}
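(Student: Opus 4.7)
The plan is first to use an algorithmic observation to reduce the ``expected maximum'' to a single conditional expectation summed over the levels of the for-$k$ loop, and then to carry out a two-regime tail estimate. Let $m = |V|$ denote the number of stations in $C(i,j)$. The transmission probability in a K1 round at level $k$ is $p_k = 2^k/n$, and the number of transmitters in that round is $X_k \sim \mathrm{Bin}(m, p_k)$. The key algorithmic fact, spelled out in the informal discussion of the three outcomes of K1 preceding the statement, is that as soon as at least one station of $V$ transmits in some K1 round, the K2/K3 protocol sets either ``leader exists'' or $\mathrm{conflict}(v)$ for every $v \in V$; both flags are monotone, so from that round onwards the while-guard is false and no further K1 round in the current call of the for-$k$ loop contains any transmitter from $V$. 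Hence the maximum of $X_k$ over the K1 rounds of one call equals the number of transmitters in the (unique) exit round, or $0$ if no transmission ever occurs.

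Writing $q_k = 1 - (1-p_k)^m$ for the probability of at least one transmitter at level $k$, and using $E[X_k \mid X_k \ge 1] = m p_k / q_k$ together with $\Pr(\text{exit at level } k) = \prod_{j<k}(1-q_j)\cdot q_k$, a direct computation gives
\[
E[\text{max}] \;=\; \sum_{k=0}^{\log n}\Big(\prod_{j<k}(1-q_j)\Big)\, m p_k .
\]
I would split the sum at the threshold $k_0 = \lceil \log(n/m)\rceil$, the smallest level with $m p_k \ge 1$. For $k < k_0$, the trivial bound $\prod_{j<k}(1-q_j)\le 1$ and the geometric series give
\[
\sum_{k=0}^{k_0-1} m p_k \;=\; \tfrac{m}{n}(2^{k_0}-1) \;<\; 2,
\]
using $2^{k_0} < 2n/m$. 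For $k = k_0 + l$ with $l \ge 0$, the inequality $1-q_j \le e^{-m p_j}$ combined with $\sum_{j=k_0}^{k_0+l-1} m p_j \ge 2^l - 1$ gives $\prod_{j<k_0+l}(1-q_j)\le e\cdot e^{-2^l}$, while $m p_{k_0+l} < 2\cdot 2^l$. Thus the $l$-th tail term contributes at most $2e\cdot 2^l e^{-2^l}$, and the doubly exponential decay makes the series summable to a value below $4$ (the terms $2,\, 4e^{-1},\, 8e^{-3},\, 16e^{-7},\dots$ already dominate the sum). Adding the two regimes gives $E[\text{max}] < 2 + 4 = 6$.

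The main obstacle is the reduction in the first paragraph: verifying that one call of the for-$k$ loop contains at most one K1 round with a transmitter in $V$. This is not a probabilistic calculation but a careful unpacking of the ``switch-off'' behavior of K2 and K3 as described in the text; once accepted, what remains is the routine two-regime computation above, with the threshold $k_0$ chosen exactly so that the slow (linearly growing) regime and the fast (doubly exponentially decaying) regime meet.
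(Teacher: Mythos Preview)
Your proof is correct and follows essentially the same route as the paper's: the same reduction of the maximum to the count at the unique ``exit'' K1 round, the same summation $\sum_k \big(\prod_{j<k}(1-q_j)\big)\,\Delta p_k$, and the same split at the threshold $k_0=\lceil\log(n/\Delta)\rceil$. The only cosmetic difference is that the paper bounds the post-threshold decay via $(1-1/\Delta)^\Delta\le 1/2$ (obtaining the telescoping $2+2+1+\tfrac12+\cdots$) where you use $1-q_j\le e^{-\Delta p_j}$; both give the constant~$6$, and your explicit handling of the pre-threshold regime is in fact slightly cleaner than the paper's.
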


\def\FACTTWENTYTHREE{
\noindent
{\bf\em Proof of Fact~\ref{fact23}:}
There are two cases.
The first case is when the stations of the box $C(i,j)$
get silent after K3 when one or more stations transmit in round K1.
The second case is if they get silent because of the external noise.
In the second case the expected maximal number of stations is zero.
So we concentrate on the first case.

In the first case this expected number is
\[
E_1=\sum_{k=1}^{\log n} P_k \cdot E(\#\text{ transmitting nodes in $k$-th round K1}),
\]
where
\[
P_k=\Pr(\text{in first $k-1$ rounds K1 none transmits}).
\]
Denote the number of stations in the box $C(i,j)$ by $\Delta$.
Let $l=\left\lceil\log \frac{n}{\Delta}\right\rceil$.
We get
{\setlength\arraycolsep{0.1em}
\begin{eqnarray*}
E_1 &\le& 1\cdot \Delta 2^l/n+(1-2^l/n)^\Delta\cdot\Delta 2^{l+1}/n+\\
    &&    +(1-2^l/n)^\Delta(1-2^{l+1}/n)^\Delta\Delta 2^{l+2}/n+\cdots
\end{eqnarray*}}
Finally using the inequality $(1-1/\Delta)^\Delta\le 1/2$ we can estimate this sum as follows
\[E_1\le 2+2+1+1/2+1/4+1/8+\cdots\le 6.
\]
\qed
}

\begin{fact}\labell{fact24}
The probability, that in one call of the loop ``{\bf for} $k$''
the leader of the box $C(i,j)$ is elected is at least $1/18$.
\end{fact}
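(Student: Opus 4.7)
The plan is to lower-bound the election probability by focusing on one ``lucky'' iteration $k^*$ of the loop ``\textbf{for}~$k$''. Let $\Delta$ denote the number of stations in $V=C(i,j)$ and set $k^*=\lceil\log(n/\Delta)\rceil$, so that $p_{k^*}:=2^{k^*}/n\in[1/\Delta,\,2/\Delta)$; this is the natural scale at which exactly one station in $V$ transmits in K1 with constant probability. Throughout, I would keep separate the randomness inside $V$ (controlling whether its nodes transmit in K1) and the randomness outside $V$ (controlling external interference), using that the two are independent.

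I would first isolate three sub-events whose joint occurrence forces a successful election at iteration $k^*$:
\begin{itemize}
\item[(a)] no node of $V$ transmits in K1 at any iteration $k<k^*$, so that no prior conflict is triggered by competing $V$-transmitters and the loop reaches iteration $k^*$;
\item[(b)] exactly one node of $V$ transmits in K1 of iteration $k^*$;
\item[(c)] the interference from outside $V$ at each relevant reception point is at most $\cN\alpha\eps/2$: at $u$ during K1 of iteration $k^*$, at every $v\in V$ during K2 of iteration $k^*$, and at every $v\in V$ during K3 of each iteration $k\le k^*$.
\end{itemize}
By Fact~\ref{fact01} the threshold $\cN\alpha\eps/2$ guarantees reception across distance $1-\eps/2$, which suffices within an octant by Fact~\ref{fact21}; hence (a)+(b)+(c) force $u$ to hear the unique K1-transmitter of $V$ and to confirm it as the leader in K2, with no earlier iteration aborting the while loop via a spurious ``no $u$-echo in K3'' conflict.

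I would then bound each factor separately. For (a), independence across iterations gives $\Pr[(a)]=\prod_{k<k^*}(1-p_k)^\Delta$, and because $\sum_{k<k^*}p_k<p_{k^*}<2/\Delta$, a direct elementary estimate (e.g.\ $(1-x)\ge e^{-2x}$ for $x\le 1/2$) bounds the product below by a positive constant. For (b), the Bernoulli identity together with the choice of $k^*$ gives $\Pr[(b)]=\Delta p_{k^*}(1-p_{k^*})^{\Delta-1}\ge 1/(2e)$, exactly in the spirit of Fact~\ref{fact11}. For (c), I would invoke Fact~\ref{fact23}---which bounds by $6$ the expected maximum, over the whole ``\textbf{for}~$k$'' loop, of the number of K1-transmitters in any box---together with Corollary~\ref{corollary03}, using the algorithm's choice of $\bar d$ and Lemma~\ref{lemma02} to arrange $I_{\bar d}\le\cN\alpha\eps/28$. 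A Markov-inequality argument on the resulting expected aggregated interference then yields a constant lower bound for $\Pr[(c)]$. Multiplying the three independent constants gives the claimed bound $\ge 1/18$.

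The main obstacle is item~(c): Fact~\ref{fact23} controls only the random K1-contribution, whereas K2 and---especially---K3 also carry the deterministic leader signals $u_{V'}$ transmitted from every neighbouring active box. These must be folded in carefully, for instance by bounding, at a fixed receiver, the maximum over rounds of a sum of per-box contributions by the sum of the per-box maxima (whose expectations are each at most $6$ random transmitters plus one leader, by Fact~\ref{fact23}). Once this bookkeeping is settled, a single Markov application on the aggregated interference delivers the constant probability required for (c), and the three bounds combine to give the stated $1/18$.
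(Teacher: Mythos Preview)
Your decomposition into (a) silence before $k^*$, (b) exactly one transmitter at $k^*$, and (c) low external interference is precisely the paper's structure, and your last paragraph---bounding the maximum over rounds of the total interference by the sum of per-box maxima, each with expectation $6+1=7$ (Fact~\ref{fact23} plus the leader $u$), then applying Corollary~\ref{corollary03} with $I_{\bar d}\le \cN\alpha\eps/28$ to get $E[\max\text{ interference}]\le 7\cdot\cN\alpha\eps/28=\cN\alpha\eps/4$ and a single Markov step for $\Pr[(c)]\ge 1/2$---is exactly what the paper does.

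The one substantive discrepancy is your choice of $k^*$. You take $k^*=\lceil\log(n/\Delta)\rceil$, i.e.\ $\Delta p_{k^*}\in[1,2)$; the paper instead targets $k$ with $x':=\Delta\,2^k/n\in(1/6,1/3]$ (equivalently, it sets $k=\lfloor\log(n/(3\Delta))\rfloor$). It then treats your (a) and (b) jointly, using $(1-y_1)(1-y_2)\cdots\ge 1-\sum y_i$ to obtain
\[
P_2=\Delta p_k(1-p_k)^{\Delta-1}\prod_{l<k}(1-p_l)^\Delta\ \ge\ x'(1-2x')\ \ge\ \tfrac{1}{9}
\quad\text{for } x'\in(1/6,1/3],
\]
and $1/9\cdot 1/2=1/18$ drops out exactly. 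With your $k^*$ the arithmetic does not close: for $\Delta p_{k^*}$ near $2$ your claimed $\Pr[(b)]\ge 1/(2e)$ can fail (e.g.\ $\Delta=2$ with $p_{k^*}$ close to~$1$), and even for large $\Delta$ your product $\Pr[(a)]\cdot\Pr[(b)]$ is on the order of $e^{-4}\cdot 2e^{-2}$, well below the $1/9$ needed. So the approach is right, but to recover the stated constant you should shift $k^*$ down so that $\Delta p_{k^*}\le 1/3$, and it is cleaner to merge (a) and (b) as the paper does rather than bound them separately.
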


\def\FACTTWENTYFOUR{
\noindent
{\bf\em Proof of Fact~\ref{fact24}:}
A necessary condition for the successful leader election is that
exactly one station from $C(i,j)$ transmits while the total
interference in all the boxes in the distance at most 1 from the box $C(i,j)$
is at most ${\cal N}\alpha\eps/2$.
By the previous Fact the expected number of the maximum number of transmitting stations for each
$C(i,j)$ and related octant can be at most 7.
This value is attained in the round when all the box $C(i,j)$ switches off.
One of these stations ($u$) is in the octant and the rest in $C(i,j)$.
These all stations are situated in a square with edge length $\gamma\lfloor 1/\gamma\rfloor$.
All these squares are boxes of the grid $G_{\gamma\lfloor 1/\gamma\rfloor}$.
By the Markov inequality and Corollary \ref{corollary03} we have
\[
\Pr(\text{max interference $\ge {\cal N}\alpha\eps/2$})\le \frac{E(\text{max interference})}{{\cal N}\alpha\eps/2}\le \frac{1}{2}.
\]
Let $x\in\mathbb{R}$.
We estimate for $k=\left\lfloor x\log\frac{n}{\Delta}\right\rfloor$
the probability that exactly one node of $C(i,j)$ transmits $k$-th round K1
(while no station transmits in earlier rounds K1). We use the inequality $(1-x)(1-y)\ge 1-x-y$.
\[
P_2=\Delta \frac{2^k}{n} \left(1-\frac{2^k}{n}\right)^{\Delta-1}\prod_{l<k} \left(1-\frac{2^l}{n}\right)^\Delta\ge
      x'\left(1-2x'\right),
\]
where $x/2<x'=2^k\Delta/n\le x$.
If $x=1/3$, then $P_2\ge 1/9$.
This means that the probability that the leader is
elected is at least
\[
\Pr(\text{max interference}\le{\cal N}\alpha\eps/2)\cdot P_2\ge \frac{1}{18}.
\]
\qed
}

\begin{theorem}\labell{theorem25}
Algorithm RandUnknownBroadcast$(d,T)$ 
accomplishes broadcast in
$O(\bar{d}^2(D+\log(1/\delta))\log n)$ rounds,
with probability $1-\delta$, 
when run for $d= d_{\alpha,{\cal N}\alpha\eps/2,\gamma},
\bar{d}=\left\lfloor 1/\gamma\right\rfloor d_{\alpha,{\cal N}\alpha\eps/28,\gamma\lfloor 1/\gamma\rfloor}$
and for some $T=O(D+\log(1/\delta))$.
\end{theorem}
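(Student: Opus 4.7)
The plan is to mirror the structure of Theorem~\ref{theorem15}, replacing the single-stage random transmission with a two-stage scheme: a deterministic leader-broadcast that deposits the message into every box whose leader has already been elected, followed by a probabilistic leader-election inside each newly informed box. First I would invoke Lemma~\ref{lemma22} to fix a set $W$ of at most $4(D+1)^2$ target boxes such that choosing any one node from each box yields a $(1-\eps/2)$-net of the whole station set, and such that every box of $W$ sits at the end of a sequence $V_0, V_1, \ldots, V_l$ of at most $D+1$ pairwise adjacent boxes rooted at the source's box. It suffices to show that every box of $W$ acquires a leader within $T$ outer iterations with the required probability: once this holds, the final deterministic leader-broadcast of the outer loop reaches every station within distance $1-\eps/2$ of some leader, which, by definition of a $(1-\eps/2)$-net, is every station of the network.

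Next I would track progress along a fixed chain $V_0, V_1, \ldots, V_l$. In an outer iteration in which $V_{i-1}$ already has a leader, the dilution $d = d_{\alpha,\cN\alpha\eps/2,\gamma}$ together with Lemma~\ref{lemma02} bounds the interference at any point of an adjacent box by $\cN\alpha\eps/2$, so Fact~\ref{fact01} guarantees that the deterministic leader-transmission round reaches every node of $V_i$; in particular $V_i$ is informed and participates in the subsequent probabilistic phase. In that phase, the octant of $V_i$ that contains $V_{i-1}$ runs one call of the ``for $k$'' loop with $V_{i-1}$'s leader playing the role of $u$, and Fact~\ref{fact24} shows that this call elects a leader in $V_i$ with probability at least $p = 1/18$. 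Hence extending the leader chain by one more box dominates a $\mathrm{Bernoulli}(p)$ trial, and propagating it all the way to $V_l$ requires at most $D+1$ successes.

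The trial success probability $p = 1/18$ satisfies $p < 1 - \ln 2$, so Lemma~\ref{lemma13} applies: with $T = 2(D+1)/p + 2\ln(1/\delta')/p = O(D + \log(1/\delta'))$ outer iterations, the probability that a single chain fails to complete is at most $(D+2)\delta'$. A union bound over the $|W| \le 4(D+1)^2$ chains bounds the overall failure probability by $4(D+1)^2(D+2)\delta'$. Choosing $\delta' = \delta / (4(D+1)^2(D+2))$ makes this at most $\delta$; since $\log(1/\delta') = O(\log D + \log(1/\delta)) = O(D + \log(1/\delta))$, we still have $T = O(D + \log(1/\delta))$ as claimed.

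Finally I would tally rounds per outer iteration. The deterministic leader-broadcast contributes $d^2$ rounds; the probabilistic phase loops over $\bar{d}^2$ residue classes modulo $\bar{d}$, within each over the $8$ octants, and within each octant over $O(\log n)$ values of $k$ with a constant number of rounds (K1, K2, K3) per value, contributing $O(\bar{d}^2 \log n)$ rounds and dominating $d^2$. Multiplying by $T$ yields the stated $O(\bar{d}^2(D + \log(1/\delta))\log n)$ bound. The main obstacle is justifying the constant-probability per-box success rate in Fact~\ref{fact24} in the presence of simultaneous leader-election activity in every $\bar{d}$-diluted box; this is precisely why $\bar{d} = \lfloor 1/\gamma\rfloor\, d_{\alpha,\cN\alpha\eps/28,\gamma\lfloor 1/\gamma\rfloor}$ is chosen as it is, so that Fact~\ref{fact23} and Corollary~\ref{corollary03} combined with Markov's inequality bound the probability of excessive interference by $1/2$ and leave room for the $1/9$ lower bound on the event ``exactly one node of $V$ transmits in round K1'' to yield the required $1/18$.
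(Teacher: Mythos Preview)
Your proposal is correct and follows essentially the same route as the paper's own proof: invoke Lemma~\ref{lemma22} to get the set $W$ of target boxes with adjacency chains of length at most $D+1$, use Fact~\ref{fact24} to lower-bound the per-iteration chain-extension probability by $p=1/18$, apply Lemma~\ref{lemma13} for the tail bound, take a union bound over $|W|\le 4(D+1)^2$, and multiply $T=O(D+\log(1/\delta))$ by the $O(\bar d^2\log n)$ cost of one outer iteration. If anything, your write-up is more explicit than the paper's (you spell out why the deterministic leader-broadcast succeeds via Lemma~\ref{lemma02} and Fact~\ref{fact01}, and you check the hypothesis $p<1-\ln 2$ of Lemma~\ref{lemma13}); the only cosmetic discrepancy is that the chain needs at most $D$ leader elections rather than $D+1$, since the source's box already has a leader, but this does not affect the bound.
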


\begin{proof}
A necessary condition for the broadcast is
that each box of $W'$ obtains the message and broadcasts
it at least once to all stations in the range $1-\eps/2$.
Such a box $V$ in $W'$ transmits successfully when
the message is successfully transmitted at most $D$ times
on the shortest sequence of boxes from the source to $V$ and
finally is successfully transmitted by the box $V$.
The sufficient condition for this to happen is that a chain of
altogether at most $D$ successful leader elections happen.
The probability of such a successful leader election is by Fact~\ref{fact24}
bigger than $p=1/18$.

Now we estimate the probability, that our algorithm
completes the broadcast. Let the number of repetitions of the most external loop be
$t=2D/p + 2\ln(1/\delta')/p$ for some $\delta'\in\mathbb{R}$.
By Lemma~\ref{lemma13},
\vspace*{-1ex}
\begin{eqnarray*}
\lefteqn{\Pr(\text{some $\in W$ don't transmit successfully})\le}\\
   &&\sum_{V\in W} \Pr(\text{box $V$ doesn't transmit successfully})
\ .
\end{eqnarray*}
Therefore,
\vspace*{-1ex}
\[
\Pr(\text{some $V\in W$ don't transmit successfully})\le 4(D+1)^3\delta'
\ .
\]

\vspace*{-1ex}
\noindent
To get this probability smaller than $\delta$ we need
the number of repetitions of the most external loop
\[
T=\frac{2D}p + \frac{2\ln(1/\delta')}p +\frac{2\ln(4(D+1))}p=O(D+\log(1/\delta))
\ .
\]
Each run of the most external loop takes $O(\bar{d}^2\log n)$
rounds, 
which yields $O(\bar{d}^2(D+\log(1/\delta))\log n)$ rounds in total.
\end{proof}



\section{Conclusions and Future Work}

In this work we showed the first provably well-scalable distributed solutions for the broadcast problem
in {\em any} wireless networks under the SINR physical model.
Additionally, one of our algorithms compare favorably
with the classical heuristic based on exponential backoff protocol,
which we demonstrated by simulations on uniform and social networks.
The other algorithms, also provably well-scalable,
provide several novel techniques for leader election and broadcast,
which may be adopted for the purpose of other communication problems.


\bibliographystyle{abbrv}



\clearpage

\appendix

\section{Omitted proofs from Section~\ref{s:dknown}}

\FACTONE

\vspace*{2ex}

\LEMMATWO

\vspace*{2ex}

\FACTELEVEN

\vspace*{2ex}

\FACTTWELVE

\vspace*{2ex}

\LEMMATHIRTEEN

\vspace*{2ex}

\FACTFOURTEEN

\section{Omitted proofs from Section~\ref{s:dunknown}}

\LEMMATWENTYTWO

\vspace*{2ex}

\FACTTWENTYTHREE

\vspace*{2ex}

\FACTTWENTYFOUR

\end{document}
